\documentclass[conference]{IEEEtran}
\IEEEoverridecommandlockouts
\usepackage{textcomp}
\def\BibTeX{{\rm B\kern-.05em{\sc i\kern-.025em b}\kern-.08em
		T\kern-.1667em\lower.7ex\hbox{E}\kern-.125emX}}

\usepackage{color}
\usepackage{amsfonts,amsmath,amssymb,amsthm}
\allowdisplaybreaks[2]
\usepackage{latexsym,amscd}
\usepackage{amsbsy}
\usepackage{graphicx,multirow,bm}
\usepackage{algorithm}
\usepackage{algorithmicx}
\usepackage{algpseudocode}
\usepackage{xcolor}
\usepackage{tikz}
\usepackage{diagbox}
\usepackage{multicol}
\usepackage{arydshln}
\usepackage{bbm}
\usepackage{caption}
\usepackage{times}
\usepackage{amsmath}
\usepackage{cases}
\usepackage{subfigure}
\usepackage{cite}
\usepackage{setspace}
\usepackage[left=0.66in, right=0.67in, top=0.76in, bottom=1.15in]{geometry}
\makeatletter

\newcommand{\mb}{\mathbf}

\newtheorem{theorem}{Theorem}
\newtheorem{lemma}{Lemma}

\newtheorem{definition}{Definition}

\newtheorem{remark}{Remark}

\newtheorem{construction}{Construction}

\newtheorem{observation}{Observation}

\begin{document}

	\title{One Burst of $t$-Deletion and One Burst of $t$-Substitution Error-Correcting Codes}
\author{Yajuan Liu,~\IEEEmembership{Member,~IEEE}, Han Cai,~\IEEEmembership{Member,~IEEE}, and Tolga M. Duman, \IEEEmembership{Fellow, IEEE}
	\thanks{
		Y. Liu, and M. Duman are with the Electrical and Electronics Engineering Department, Bilkent University, Ankara, Turkey (email: yajuan.liu@bilkent.edu.tr, duman@ee.bilkent.edu.tr), and H. Cai is with the Information Security and National Computing Grid Laboratory, Southwest Jiaotong University, Chengdu, China (email: hancai@swjtu.edu.cn).
	}	
	\thanks{This work was funded by the European Union through the ERC Advanced
		Grant 101054904: TRANCIDS. Views and opinions expressed are, however,
		those of the authors only and do not necessarily reflect those of the European
		Union or the European Research Council Executive Agency. Neither the
		European Union nor the granting authority can be held responsible for them.}
	%\thanks{This work was supported in part by the National Natural Science Foundation of China under Grant 62271421.}
}

\maketitle
	
		\begin{abstract}
			Synchronization errors, including insertions, deletions, and substitutions, may occur in bursts
		in communication systems such as DNA data storage, file synchronization, and magnetic recording.
		In this paper, we study an error model consisting of one burst of $t$-deletions and one burst of $t$-substitutions. By reformulating the original sequence into a matrix form, we propose an explicit construction of error-correcting codes capable of correcting one burst of $t$-deletions and one burst of $t$-substitutions with $O(\log n)$ redundancy.
	\end{abstract}

\section{Introduction}	
With the advancement of document synchronization applications \cite{cheng2019block}, magnetic recording channels \cite{levenshtein1993perfect}, and DNA data storage systems \cite{heckel2019acharacterization}, synchronization channels involving insertions, deletions, and substitutions have garnered significant attention in recent years \cite{hu2010achievable,brakensiek2018efficient,sima2021on,guruswami2021explicit}.
In many practical communication and storage scenarios, errors tend to occur in bursts \cite{parkin2008magnetic,mazumdar2011channels,zhang2015hifi}.
% Specifically, insertion, deletion and substitution errors may arise at consecutive positions with a bounded maximum length, forming what are commonly referred to as burst errors.
%The presence of burst synchronization errors poses significant challenges for error correction as such errors typically affect long contiguous substring of the transmitted sequences.
%Consequently, conventional error-correcting codes (ECCs) designed for random errors may exhibit poor performance or incur prohibitively large redundancy when applied to burst-error settings.
Therefore, it is of fundamental importance to  develop error-correcting codes (ECCs) that can efficiently handle burst synchronization errors while maintaining low redundancy.

In the 1970s, Levenshtein laid the foundations to codes capable of correcting burst deletions in \cite{levenshtein1966binary,levenshtein1970asymptotically}, establishing a lower bound of $\log n+t-1$ bits of redundancy for codes correcting a single burst of $t$-deletions. In particular, asymptotically optimal binary codes with $\log n+1$ bits of redundancy were constructed for the case of $t=2$.\footnote{In this paper, unless stated otherwise, all logarithmic operations are base two.} Subsequently, several works extended these results to general $t\ge 2$ over binary and non-binary alphabets \cite{cheng2014codes,schoeny2017codes,saeki2018improvement}.
%In \cite{cheng2014codes}, one burst of $t\ge3$-deletion ECCs were presented for binary sequences with $t\log(n/t+1)$ bits of redundancy, which was improved to an optimal redundancy $\log n+(t-1)\log\log n+O(1)$ in \cite{schoeny2017codes}. Inspired by this work, Saeki \textit{et al.} \cite{saeki2018improvement} discussed one burst of $t$-deletion ECCs for $q$-ary sequences with $\log_q n+o(\log n)$ bits of redundancy.
As a further generalization, a number of studies \cite{gabrys2018codes,lenz2020optimal,schoeny2017codes,sima2020syndrome,song2023nonbinary,nguyen2024anew,song2024new,wang2024nonbinary} investigated codes capable of correcting one burst of at most $t$ deletions. Among these, the recent work of Song \textit{et al.} \cite{song2024new} achieved a redundancy of $\log n+8\log\log n+o(\log\log n)$ bits.
Moreover,
 Schoeny \textit{et al.} showed that correcting one burst of $t$-deletion ECCs is equivalent to correcting one burst of $t$-insertion ECCs in \cite{schoeny2017codes}. Building on this equivalence, Sun \textit{et al.} \cite{sun2025asymptotically} presented constructions  with $\log n+o(\log n)$ bits of redundancy for correcting one burst of $t$-edits over $q$-ary alphabets under certain conditions, where $q \geq 2$.

 Another notable contribution in \cite{schoeny2017codes} introduced burst ECCs capable of simultaneously handling deletions and insertions, constructing binary codes that correct one burst of $2$-deletion-$1$-insertion ($(2,1)$-DI) errors. This result was subsequently extended by Lu \textit{et al.} to one burst of $(t_1,1)$-DI errors \cite{lu2022tdeletion1}, and further to $(t_1,t_2)$-DI errors with $t_1 \geq 2t_2$ \cite{lu2023tdeletion}, establishing a lower bound of $\log(n-t_1+2)+t_1-1$ bits of redundancy.
  More recently, Sun \textit{et al.} \cite{sun2025asymptotically} provided explicit constructions of $q$-ary codes capable of correcting one burst of $(t_1,t_2)$-DI errors with $\log n+o(\log n)$ bits of redundancy, applicable to any non-negative integers $t_1,t_2$ and $q \geq 2$.

Beyond the single-burst setting, Sima \textit{et al.} \cite{sima2020syndrome} introduced a fundamentally different approach based on syndrome compression, enabling ECCs capable of correcting $m$ bursts of at most $t$-deletions with redundancy $4m\log n+o(\log n)$ bits. 
	Their work represents one of the first scalable frameworks that extend burst-deletion correction from a single-burst scenario to the general multi-burst regime while maintaining asymptotically optimal logarithmic redundancy. 
	In particular, focusing on the case of two bursts of exactly $t$-deletions, Ye \textit{et al.} \cite{ye2024codes} provided the first explicit construction with $5\log n+o(\log n)$ bits of redundancy. Shortly thereafter, Liu and Duman \cite{liu2025two} generalized the deletion-insertion framework by extending the single burst of $(t_1,t_2)$-DI model of \cite{sun2025asymptotically} to the two-burst setting for binary sequences, achieving redundancy of $11\log n+o(\log n)$.

Building upon these developments, this paper proposes binary ECCs capable of correcting one burst of $t$-deletions together with one burst of $t$-insertions. Throughout this work, we assume that two bursts in a sequence are non-overlapping. The proposed ECCs are developed by converting two burst errors into some erasures or substitutions, which is achieved by applying one burst of $t$-edit ECCs to appropriate substrings of the original sequence according to specific rules. This approach yields an explicit construction with redundancy on the order of $O(\log n)$. 
To the best of our knowledge, this is the first explicit construction for the channel model under consideration.

The remainder of the paper is organized as follows.
Section \ref{sec:Preliminaries} introduces some necessary preliminaries.
Section \ref{sec:1D1S} presents the explicit constructions of ECCs capable of correcting one burst of $t$-deletions together with one burst of $t$-substitutions.
Finally, Section \ref{sec:Conclusions} concludes this paper.

%As a key step, we first establish a necessary and sufficient condition for codes to be correctable under two bursts of $t$-edits. Building upon this characterization, we then design explicit constructions of burst ECCs by leveraging several related results in this literature.

\section{Preliminaries}\label{sec:Preliminaries}
We first establish the notation used throughout the paper.
For two non-negative integers $a$ and $b$ with $a < b$, we denote the ordered sets $\{a,a+1,\dots,b-1\}$ and $\{a,a+1,\dots,b\}$ by $[a,b)$ and $[a,b]$, respectively.
When $a=1$, we abbreviate $[1,b]$ as $[b]$.
The cardinality of a set $\cal{A}$ is denoted by $|\mathcal{A}|$.
For an $m\times n$ matrix $A$, we use $A[i]$ to denote its $i$-th row, where $i\in[m]$. %and  $A_{i,j}$ to denote the entry at the $i$-th row and $j$-th column, where $i\in[m],j\in[n]$.	
Let $\Sigma_q=\{0,1,\dots,q-1\}$, and denote by $\Sigma_q^n$ the set of all sequences with length $n$ over $\Sigma_q$.
For a $q$-ary sequence $\mathbf{x}=(x_1,x_2,\dots,x_n)\in\Sigma_q^n$, the length of $\mathbf{x}$ is denoted by $|\mathbf{x}|=n$.
Given an index set $\mathcal{I}=\{i_1,i_2,\dots,i_m\}\subseteq[n]$, the projection of $\mathbf{x}$ onto $\mathcal{I}$ is denoted by $\mathbf{x}_{\mathcal{I}}=(x_{i_1},x_{i_2},\dots,x_{i_m})$. We refer to $\mathbf{x}_{\mathcal{I}}$ as a \textit{substring} of $\mathbf{x}$ when $i_{j-1}=i_{j}-1$ for all $j\in[2,m]$.
If $x_i=x_{i+1}$ for every $i\in[i_1,i_m-1]$, the substring $(x_{i_1},x_{i_2},\dots,x_{i_m})$ is called a \textit{run} of $\mathbf{x}$. If $x_i=x_{i+2}$ for every $i\in[i_1,i_m-2]$, it forms an \textit{alternating substring} in $\mathbf{x}$. We remark that a run of $\mathbf{x}$ is also an alternating substring.
In addition, the number of runs in $\mathbf{x}$ and the $i$-th run of $\mathbf{x}$ are denoted by  $r(\mathbf{x})$ and $\mathbf{x}_i$, respectively, where $i \in [r(\mathbf{x})]$.

\begin{definition}\label{def:DS}
	For a binary sequence $\mathbf{x}=(x_1,x_2,\dots,x_n)\in\Sigma_2^n$ and  a positive integer $t$,
	if $(x_i,x_{i+1},\dots,x_{i+t-1})$ is deleted and $(x_j,x_{j+1},\dots,x_{j+t-1})$ is substituted by  $(x'_j,x'_{j+1},$ $\dots,x'_{j+t-1})$, where $i,j \in[n-t+1]$ with $i+t\le j$ or $j+t\le i$, $x'_k\in\{0,1\},k\in[j,j+t)$,  we say that a \textit{one burst of $t$-deletions together with one burst of $t$-substitutions} occur in $\mathbf{x}$.
\end{definition}

We denote by $\mathcal{M}_2(\mathbf{x})$ the set of sequences obtained from $\mathbf{x}$ by one deletion together with one substitution.
Let $\mathcal{B}_{1,1}^{DS}(\mathbf{x})$ and $\mathcal{B}^{E}(\mathbf{x})$
be the sets of sequences obtained from $\mathbf{x}$ by one burst of $t$-deletions together with one burst of $t$-substitutions and  one burst of $t$-edits, respectively.
%Furthermore, let $\mathcal{B}_{m}^D(\mathbf{x})$, $\mathcal{B}_{m}^S(\mathbf{x})$ and $\mathcal{B}_m^{E}(\mathbf{x})$ be sets of sequences obtained from $\mathbf{x}$ by $m$ bursts of $t$-deletions, $m$ bursts of $t$-substitutions, and $m$ bursts of $t$-edits, respectively. When $m=1$, we abbreviate as $\mathcal{B}^D(\mathbf{x})$, $\mathcal{B}^S(\mathbf{x})$ and $\mathcal{B}^{E}(\mathbf{x})$. Note that $m$ bursts of $t$-edits refer to a combination of $m_1$ bursts of $t$-insertions, $m_2$ bursts of $t$-deletions, and $m_3$ bursts of $t$-substitutions, where $m=m_1+m_2+m_3$.

\begin{definition}\label{def:DS1}
	A subset of $\Sigma_2^n$, denoted by $\mathcal{C}$, is referred to as a \textit{one burst of $t$-deletion one burst of $t$-substitution ECC} if for any two distinct sequences $\mathbf{x},\mathbf{y}\in\mathcal{C}$, it holds that
	\begin{align*}
		\mathcal{B}_{1,1}^{DS}(\mathbf{x})\cap \mathcal{B}_{1,1}^{DS}(\mathbf{y})=\varnothing.
	\end{align*}
\end{definition}

	\begin{definition}\label{def:bounded}
		Given a parameter $P$ (typically, of order
		$\log n$), a code $\mathcal{C}\subseteq\Sigma_2^n$ is referred to as a $P$-bounded one deletion one substitution ECC if, for any two distinct sequences $\mathbf{x}=\mathbf{u}\tilde{\mathbf{x}}\mathbf{v},\mathbf{y}=\mathbf{u}\tilde{\mathbf{y}}\mathbf{v}\in\mathcal{C}$ with $|\tilde{\mathbf{x}}|=|\tilde{\mathbf{y}}|\le P$, it holds that
		\begin{eqnarray*}
			\mathcal{M}_2(\tilde{\mathbf{x}})\cap \mathcal{M}_{2}(\tilde{\mathbf{y}})=\varnothing.
		\end{eqnarray*}
\end{definition}
%	\begin{remark}
	%		By convention, one burst of $t$-substitutions means that at most $t$ consecutive bits are substituted in a binary sequence. Throughout this paper, we restrict attention to the case where exactly $t$ consecutive bits are substituted, i.e., precisely $t$-complements occur.
	%	\end{remark}

In the following, we review several related ECC constructions, which will serve as the crucial components throughout this paper. Before that, we introduce a few useful definitions.
\begin{definition}
	For any binary sequence $\mathbf{x}=(x_1,x_2,\dots,x_n)\in\Sigma_2^n$,  the \textit{differential sequence} of $\mathbf{x}$ is defined as
	$
		\phi(\mathbf{x})\triangleq \big(\phi(x)_1,\phi(x)_2,\dots,\phi(x)_n\big)
	$,
	where $\phi(x)_i\triangleq x_{i}-x_{i-1} \pmod 2$ for $i\in[n]$ and we set $x_0=0$ for convenience.
\end{definition}

\begin{definition}
	Given an integer $\ell\ge 2$ and a constant $0<\epsilon<1/2$, a sequence $\mathbf{x}\in\Sigma_2^n$ is referred to as a \textit{strong $(\ell,\epsilon)$-locally balanced sequence} if for every substring of $\mathbf{x}$ with length $\ell'\ge\ell$, the number of ones lies within $[(1/2-\epsilon)\ell',(1/2+\epsilon)\ell']$.
\end{definition}
\begin{definition}
	A binary sequence $\mathbf{x}\in\Sigma_2^n$ is said to be \textit{$d$-regular}  if every substring of $\mathbf{x}$ with length at least $d\log n$ contains both $00$ and $11$.
\end{definition}

\begin{observation}\label{observation}
	If $\mathbf{x}\in\Sigma_2^n$ is a $d$-regular sequence, then  the length of each run and
	alternating substring in $\mathbf{x}$ is at most $d\log n$, since
	both $00$ and $11$ must appear in every substring of length $d\log n$.
\end{observation}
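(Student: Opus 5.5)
The Observation claims that in a $d$-regular sequence every run and every alternating substring has length at most $d\log n$. The plan is a short contradiction argument that uses only the definition of $d$-regularity. The idea is that a long run, or a long alternating substring, cannot contain both patterns $00$ and $11$.

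First, the run case. Suppose $\mathbf{x}$ has a run $(x_{i_1},\dots,x_{i_m})$ of length $m>d\log n$. Take its prefix of length $\lceil d\log n\rceil$; this is a substring of $\mathbf{x}$ of length at least $d\log n$. Every symbol in it equals the same bit $b$. So every length-2 window inside it is $bb$, and the complementary pattern $\bar b\bar b$ never occurs. This contradicts $d$-regularity, which requires both $00$ and $11$.

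Next, the alternating case. Suppose $(x_{i_1},\dots,x_{i_m})$ is alternating with $m>d\log n$, meaning $x_i=x_{i+2}$ throughout. Again take a prefix of length $\lceil d\log n\rceil$, which is a substring of $\mathbf{x}$ of length at least $d\log n$.
\begin{itemize}
\item If its first two symbols are equal, then by $x_i=x_{i+2}$ it is a constant run, which the previous paragraph already excludes.
\item Otherwise it has the form $abab\cdots$ with $a\neq b$. Every length-2 window is then $ab$ or $ba$, so neither $00$ nor $11$ appears. This again contradicts $d$-regularity.
\end{itemize}
Since a run is a special case of an alternating substring, as the paper remarks, the two cases could also be merged into one argument.

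The only point needing care is bookkeeping when $d\log n$ is not an integer. Using a prefix of length $\lceil d\log n\rceil\le m$ handles this, and gives the bound $m\le \lceil d\log n\rceil-1<d\log n$. If $d\log n$ is an integer, the same argument gives $m\le d\log n-1$. In either case the stated bound $m\le d\log n$ holds. Apart from this, there is no real obstacle.
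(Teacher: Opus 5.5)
Your proof is correct and follows the paper's own reasoning: the paper justifies the observation in one line, namely that a run or alternating substring can never contain both $00$ and $11$, so by $d$-regularity it cannot have length $d\log n$ or more. You make this explicit, with the prefix-of-length-$\lceil d\log n\rceil$ bookkeeping; as you note, that bookkeeping actually yields the slightly stronger strict bound $m<d\log n$.
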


\begin{lemma}[Lem.15, \cite{sun2025codes}]\label{lem:regular}
	For two parameters $\ell=d\log n$ and $\epsilon$ being a constant less than one,	if both $\mathbf{x}$ and $\phi(\mathbf{x})$ are strong $(\ell,\epsilon)$-locally balanced
	sequences, then $\mathbf{x}$ is necessarily $d$-regular.
\end{lemma}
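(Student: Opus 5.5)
The plan is to argue by contradiction. Suppose both $\mathbf{x}$ and $\phi(\mathbf{x})$ are strong $(\ell,\epsilon)$-locally balanced with $\ell=d\log n$. Suppose also that some substring $\mathbf{w}$ of $\mathbf{x}$ of length $L\ge d\log n$ fails to contain $00$ or fails to contain $11$. By the symmetry between $0$ and $1$, I will handle the case where $\mathbf{w}$ contains no $11$; the case with no $00$ is identical after complementing the roles of the symbols. The goal is to show that one of the two local balance conditions fails on a suitable window of length at least $\ell$.

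\textbf{Step 1: structure of $\mathbf{w}$.} If $\mathbf{w}$ has no $11$, then every $1$ in $\mathbf{w}$ is isolated, so $\mathbf{w}$ has the form $0^{a_0}10^{a_1}1\cdots$. Let $k$ be the number of ones in $\mathbf{w}$, and count the ones of the derivative on the same window: each isolated interior $1$ contributes two transitions, $0\to1$ and $1\to0$. Hence, up to boundary effects of at most $2$ (including the convention $x_0=0$ when the window starts at position $1$), the restriction of $\phi(\mathbf{x})$ to the positions of $\mathbf{w}$ contains about $2k$ ones.

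\textbf{Step 2: combine the two balance conditions.} Since $|\mathbf{w}|=L\ge\ell$, local balance of $\mathbf{x}$ gives $k\le(1/2+\epsilon)L$ and $k\ge(1/2-\epsilon)L$. A window of $\phi(\mathbf{x})$ of the same length is also a substring of length at least $\ell$, so its weight $2k+O(1)$ must lie in $[(1/2-\epsilon)L,(1/2+\epsilon)L]$. Together these force
\[
2(1/2-\epsilon)L-O(1)\le(1/2+\epsilon)L,
\]
that is, $(1/2-3\epsilon)L\le O(1)$. This is a contradiction for large $n$ provided $\epsilon<1/6$. A cleaner route avoids this threshold: with no $11$, at least $k-1$ zeros separate the ones, so $k\le (L+1)/2$ automatically. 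The real constraint then comes from $\phi$, since the weight of $\phi$ is roughly $2k$ and must be at most $(1/2+\epsilon)L$, forcing $k\lesssim L/4$. But then the weight of $\mathbf{x}$ on $\mathbf{w}$ is at most about $L/4$, which is below $(1/2-\epsilon)L$ whenever $\epsilon<1/4$.

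\textbf{Step 3: handle the allowed range of $\epsilon$.} The main obstacle is that the statement says only that $\epsilon$ is a constant less than one, while the contradiction above needs something like $\epsilon<1/4$. The definition of a locally balanced sequence already requires $\epsilon<1/2$, but that alone is not enough. My first attempt would be to exploit the freedom in choosing windows. If $\mathbf{w}$ is long, use sub-windows of length exactly $\ell$, and note that both sequences must be balanced on every one of them. I would also use the fact that for a binary string with no $11$, the derivative weight is essentially exactly twice the weight of $\mathbf{x}$, to push the inequality as far as it goes. If a gap remains for $\epsilon$ near $1/2$, I would conclude that the lemma, as cited from \cite{sun2025codes}, implicitly assumes $\epsilon$ is sufficiently small, for example $\epsilon<1/6$, and state the proof under that hypothesis.

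Finally, Observation~\ref{observation} then follows immediately, giving the bound $d\log n$ on the lengths of runs and of alternating substrings.
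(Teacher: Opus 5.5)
The paper gives no proof of this lemma; it is imported from \cite{sun2025codes}. So there is no in-paper argument to compare with, and your proposal has to stand on its own.

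\textbf{What works.} The first argument in your Step 2 is correct. Let $\mathbf{w}=\mathbf{x}_{[s,s+L-1]}$ with $L\ge\ell$ contain no $11$ and have $k$ ones. Each isolated one produces two transitions, and only the first and last positions can lose one. Hence the weight of $\phi(\mathbf{x})_{[s,s+L-1]}$ lies in $[2k-2,2k]$. Combining this with $k\ge(1/2-\epsilon)L$ gives $(1-2\epsilon)L-2\le(1/2+\epsilon)L$, i.e.\ $(1/2-3\epsilon)L\le 2$. This is impossible once $\epsilon<1/6$ and $d\log n>2/(1/2-3\epsilon)$. The no-$00$ case is the same count with isolated zeros. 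This covers the paper's choice $\epsilon=1/18$.

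\textbf{What needs fixing.}

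First, your ``cleaner route'' contains an arithmetic slip. The bound $2k-2\le(1/2+\epsilon)L$ gives $k\le(1/4+\epsilon/2)L+1$, not $k\lesssim L/4$. Comparing this with $k\ge(1/2-\epsilon)L$ brings you back to exactly $\epsilon<1/6$. The claim that $\epsilon<1/4$ suffices is false.

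Second, Step 3 should not be left as an open attempt to squeeze more out of sub-windows. The threshold $1/6$ is sharp, so no such attempt can succeed. Take $\mathbf{x}$ to be the length-$n$ prefix of $(100)^\infty$, so that $\phi(\mathbf{x})$ is the prefix of $(110)^\infty$. Every window of length $\ell'$ has weight within $1$ of $\ell'/3$ in $\mathbf{x}$ and within $1$ of $2\ell'/3$ in $\phi(\mathbf{x})$. Fix any $\epsilon\in(1/6,1/2)$. Then both sequences are strong $(\ell,\epsilon)$-locally balanced as soon as $\ell\ge 1/(\epsilon-1/6)$. Yet $\mathbf{x}$ contains no $11$ at all, so it is not $d$-regular.

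Hence the lemma as printed (``$\epsilon$ a constant less than one'') is false. The correct version must assume $\epsilon<1/6$, or more precisely $(1/2-3\epsilon)\,d\log n>2$. Under that hypothesis, your first route is a complete proof, and Step 3 should be replaced by this counterexample rather than by speculation about extending the range of $\epsilon$.
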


%	\subsection{Related Results}
In this paper,  unless otherwise specified, we set $\ell=1296\log n$ and $\epsilon={1/18}$. Denote $\mathcal{C}_{DS}^P\subseteq\Sigma_2^n$ as a binary $P$-bounded one deletion one substitution ECC, where $P\ge 6(\ell+3)$.

\begin{lemma}[Thm.4, \cite{sun2024binary}]\label{lem:mix}
	For $\mathbf{x}\in\Sigma_2^n$ with both $\mathbf{x}$ and $\phi(\mathbf{x})$ being strongly $(\ell,\epsilon)$-locally balanced, if $\mathbf{x}\in\mathcal{C}_{DS}^P$, there exists a function $g_{ds}:\Sigma_2^n\rightarrow \Sigma_2^{4\log n+12\log\log n+O(1)}$, such that given $g_{ds}(\mathbf{x})$ and $\mathbf{x}'\in\mathcal{M}_2(\mathbf{x})$,
	one can uniquely recover $\mathbf{x}$.
\end{lemma}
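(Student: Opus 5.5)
The plan is to reduce the global problem to the $P$-bounded setting of Definition~\ref{def:bounded}. I would do this by spending about $4\log n$ bits on syndromes that localize the errors to a window of length at most $P$, and $O(\log\log n)$ bits on local hashes. First, since both $\mathbf{x}$ and $\phi(\mathbf{x})$ are strong $(\ell,\epsilon)$-locally balanced, Lemma~\ref{lem:regular} with $\ell=1296\log n$ makes $\mathbf{x}$ $1296$-regular. By Observation~\ref{observation}, every run and every alternating substring of $\mathbf{x}$ then has length at most $\ell$. This is the structural property that converts ambiguity in the position of an error into ambiguity inside a window of length $O(\ell)$.

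\textbf{Step 1 (syndromes).} I would include in $g_{ds}(\mathbf{x})$ the weight $\sum_i x_i \bmod 4$ and two VT-type syndromes taken modulo $cn$ for a constant $c$: $\sum_i i\,x_i$ and $\sum_i i\,\phi(x)_i$. Each costs about $\log n+O(1)$ bits. I would add a second-order moment, for example $\sum_i \binom{i}{2}x_i \bmod cn^2$, which costs about $2\log n$ bits; the total is $4\log n+O(1)$.

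\textbf{Step 2 (localizing the deletion).} From $\mathbf{x}'$, the weight syndrome determines the deleted bit up to the $\pm1$ change caused by the substitution. The first-moment syndromes then determine the deletion position up to an additive error that depends on the unknown substitution position and value. I would use the second moment to eliminate the substitution's unknown contribution and pin the deletion position down to within the run or alternating substring that contains it. By regularity, this gives an interval of length $O(\ell)$.

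\textbf{Step 3 (case split on the distance between the errors).} If the substitution also lies within distance $O(\ell)$ of the deletion, then both errors are confined to a window of length at most $6(\ell+3)\le P$. In that case $\mathbf{x}$ and any other candidate $\mathbf{y}$ have the form $\mathbf{u}\tilde{\mathbf{x}}\mathbf{v}$ and $\mathbf{u}\tilde{\mathbf{y}}\mathbf{v}$ with $|\tilde{\mathbf{x}}|=|\tilde{\mathbf{y}}|\le P$. The defining property of $\mathcal{C}_{DS}^P$ (Definition~\ref{def:bounded}) then resolves the candidate uniquely. If instead the substitution is far away, I would correct the deletion locally: after the localization of Step 2, the candidate set for the deletion has size $\mathrm{poly}(\log n)$. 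I would distinguish the candidates by hash values of $O(\log\log n)$ bits each, computed over the $O(\ell)$-windows of a block partition and aggregated modulo a small number so that the total stays $O(\log\log n)$. Once the deletion is fixed, the corrected sequence differs from $\mathbf{x}$ by a single substitution. Its position is determined by the difference of the first-moment syndrome, and its value by the weight syndrome. The remaining $12\log\log n$ bits would come from three or four such hashes, one for each ambiguity arising in the locality analysis.

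\textbf{Main obstacle.} I expect the hardest step to be Step 2. A substitution at an arbitrary, unknown position perturbs every moment syndrome, so the deletion cannot be localized naively. The argument must show that the pair of moment equations together with the local balance of $\mathbf{x}$ and $\phi(\mathbf{x})$ forces every consistent (deletion, substitution) pair either to lie within one window of length $P$ or to leave only $\mathrm{poly}(\log n)$ candidates. My first attempt would be to write the two syndrome differences as functions of the deletion and substitution positions. I would then use local balance (at most a $(1/2+\epsilon)$ fraction of ones in every long substring) to show that these functions are strictly monotone over scales larger than $\ell$. Strict monotonicity at those scales would rule out distant alternative solutions.
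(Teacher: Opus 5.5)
The paper does not prove this lemma; it imports it from \cite{sun2024binary}. Your proposal therefore has to stand on its own, and it does not. There is a genuine gap: everything nontrivial sits in what you call the main obstacle. You must show that any codeword $\mathbf{y}\neq\mathbf{x}$ with $g_{ds}(\mathbf{y})=g_{ds}(\mathbf{x})$ and $\mathcal{M}_2(\mathbf{x})\cap\mathcal{M}_2(\mathbf{y})\neq\varnothing$ agrees with $\mathbf{x}$ outside a single window of length at most $P$ that contains both error patterns. Only then does Definition~\ref{def:bounded} apply. Your Step 3 reasons about the window of the true errors of $\mathbf{x}$, which is not the same thing, and Step 2 is only asserted.

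The mechanism you propose for Step 2 also fails. Weight mod $4$ fixes the deleted bit $b$ and the substitution direction $e$ exactly, since the weight drops by $b+e\in\{-1,0,1,2\}$. So a competing explanation $(\hat d,\hat s)$ of the same $\mathbf{x}'$ has the same $b$ and $e$. Take $b=0$, $e=+1$, $d<s$:
\begin{itemize}
\item The first moment forces $\hat s-s$ to be, up to $O(1)$, the number of ones of $\mathbf{x}'$ between $d$ and $\hat d$, which is about $(\hat d-d)/2$.
\item Each time $\hat d$ passes a one, the second-moment mismatch changes by about $\hat d-\hat s(\hat d)$. This increment changes sign, because $\hat s$ moves at half the speed of $\hat d$.
\item So the mismatch is roughly a parabola in $\hat d$, not a monotone function. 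In the balanced approximation it vanishes again near $\hat d\approx 4s-3d$, $\hat s\approx 3s-2d$.
\end{itemize}
Local balance controls only densities, so it cannot rule out an exact zero there. Your planned ``strict monotonicity at scales larger than $\ell$'' is therefore false for the moments of $\mathbf{x}$.

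The natural tool is the sketch you list but never use, $\sum_i i\,\phi(x)_i$. A substitution at $s$ flips $\phi(x)_s$ and $\phi(x)_{s+1}$, so it shifts this sketch by $\pm1$ or $\pm(2s+1)$. Combining it with $\sum_i i\,x_i$ therefore cancels $s$ to first order. The balance of both $\mathbf{x}$ and $\phi(\mathbf{x})$ is what can make such combinations strictly monotone in the deletion position. Carrying this out over all configurations is substantial work that the proposal omits. The configurations include the case where the deleted bit is a singleton run, which adds a $2d+1$ term. What this can deliver is an $O(\ell)$-window for each error, not a run.

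Your far case is unsupported as well. The hashes are never specified. A sum of block hashes reduced modulo a small number gives no guarantee of separating candidates that differ from $\mathbf{x}$ in two distant blocks, because the two changes can cancel.

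Once localization is established, no hashes are needed there. Suppose $|\hat d-d|$ and $|\hat s-s|$ are $O(\ell)$. The first moment gives $\hat s-s=\pm Q$, where $|Q|$ counts the ones (if $b=0$) or zeros (if $b=1$) of $\mathbf{x}'$ between $d$ and $\hat d$. The second moment then reduces exactly to $(s-d)Q=R$ with $|R|\le |Q|\cdot O(\ell)$, with positions measured in $\mathbf{x}'$. Hence either $Q=0$, which forces $\hat s=s$ and $\hat{\mathbf{x}}=\mathbf{x}$, or $|s-d|=O(\ell)$, which is the $P$-bounded case. That is the actual role of the second moment in an argument of this type.
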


%\begin{remark}
%	In \cite{sun2025codes}, the original constructions of one burst of $t$-edit ECCs were developed with $\ell=625\log n$ and $\epsilon={1/15}$. By Lemma \ref{lem:le}, the results also hold for $\ell=1296\log n$ and $\epsilon={1/18}$. Hence, we adopt the latter parameters throughout this paper.
%\end{remark}

\begin{lemma}[Cor.1, \cite{song2022systematic}]\label{lem:ds}
	For any binary sequence $\mathbf{x}\in\Sigma_2^n$, there exists a function $g'_{ds}: \Sigma_2^n\rightarrow \Sigma_2^{6\log n+5}$, such that given $g'_{ds}(\mathbf{x})$ and $\mathbf{x}'\in\mathcal{M}_2(\mathbf{x})$,  one can uniquely recover $\mathbf{x}$.
\end{lemma}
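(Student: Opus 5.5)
The plan is to build $g'_{ds}$ explicitly as a short vector of weighted-sum (VT-type) syndromes. Decoding then reduces to solving a few integer equations for the unknown deletion position, the deleted bit, and the substitution position. I would take $c_0(\mathbf{x})=\sum_i x_i \bmod 4$ and, for $k=1,2,3$, $c_k(\mathbf{x})=\sum_{i=1}^n m_i^{(k)}x_i \bmod 2n^{k}$, where $m_i^{(1)}=i$, $m_i^{(2)}=\binom{i+1}{2}$, $m_i^{(3)}=\binom{i+2}{3}$. These are the standard higher-order VT weights, whose partial sums behave like iterated prefix counts. The bit count is $2+(\log n+1)+(2\log n+1)+(3\log n+1)=6\log n+5$, which matches the statement. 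I would set $g'_{ds}(\mathbf{x})=(c_0,c_1,c_2,c_3)$. Since every $|\sum_i m_i^{(k)}x_i|$ change caused by one deletion plus one substitution is smaller than half the modulus, each modular syndrome determines the true integer discrepancy.

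\textbf{Step 1 (weight and type of errors).} Given $\mathbf{x}'\in\mathcal{M}_2(\mathbf{x})$, compare $c_0$ with the weight of $\mathbf{x}'$. The weight difference lies in $\{-1,0,1,2\}$ and is read off modulo $4$. Together with the deleted bit $b\in\{0,1\}$ and the substitution direction ($0\to1$ or $1\to0$), this leaves only $O(1)$ cases.

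\textbf{Step 2 (equations in the positions).} In each case I write the discrepancies $\Delta_k=\sum_i m_i^{(k)}x_i-\sum_i m_i^{(k)}x'_i$ in terms of the deletion position $p$ and substitution position $s$. A deletion of bit $b$ at $p$ contributes $b\,m_p^{(k)}$ plus a term shifting all later symbols. Using $m_{i+1}^{(k)}-m_i^{(k)}=m_{i+1}^{(k-1)}$, that shift equals the $(k-1)$-order weighted count of ones after $p$, exactly as in the VT argument. The substitution contributes $\pm m_s^{(k)}$.

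\textbf{Step 3 (uniqueness; main obstacle).} The task is to show that these three equations determine $\mathbf{x}$ uniquely. I would argue by contradiction: suppose two candidates $(p,s)$ and $(\tilde p,\tilde s)$ both produce $\mathbf{x}'$ with the same syndromes. Subtracting, the order-$1$ equation gives a VT-type monotonicity constraint. This constraint pins down the relative order of the deletion positions, and moving the deletion within a run does not change the result. The order-$2$ and order-$3$ equations then give polynomial identities in $s,\tilde s$ together with partial run counts. A Vandermonde-type/convexity argument on the strictly increasing sequences $m^{(2)},m^{(3)}$ should force $s=\tilde s$, and hence identical candidates. The case analysis here is the delicate part: the relative order of $p$ and $s$, and whether the substitution lies inside the shifted segment, must all be handled. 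If the direct argument stalls, I would first handle the sub-case where the substitution and deletion lie in a common window, and otherwise fall back on the known systematic construction of Song et al., from which this statement is taken as Cor.~1.
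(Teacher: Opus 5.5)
\paragraph{Verdict.} The paper gives no argument for Lemma~\ref{lem:ds}. It imports the lemma verbatim from \cite{song2022systematic}, so your closing fallback is exactly the paper's treatment. As a self-contained proof, however, your proposal has a genuine gap. Step~3 is the entire content of the lemma, it is only announced, and the mechanism you announce would not work.

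\paragraph{Why Step~3 fails as sketched.} With a substitution present, the order-$1$ discrepancy is $b\,p+\mathrm{wt}(\mathbf{x}_{[p+1,n]})+\sigma s$ with $\sigma\in\{0,\pm1\}$. The free term $\sigma s$ can offset any change in the deletion part, so this equation alone does not constrain the relative order of deletion positions. Nor can a Vandermonde or convexity argument on $m^{(2)},m^{(3)}$ isolate $s$. Such an argument needs a bounded number of point masses, but the shift term $\sum_{i=p+1}^{q}m^{(k-1)}_ix_i$ is spread over the whole (possibly long) interval between two candidate deletion positions and depends on the unknown data. The four equations must be used jointly.

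A smaller point: your claim that each modular syndrome alone fixes the integer discrepancy is false for $k=1$. Deleting $x_1$ from $1^n$ and flipping the last remaining bit gives discrepancy $2n-1>n$. This is repaired by first using the mod-$4$ check to get equal weights. After that, the order-$1$ discrepancies of two candidates differ by at most $2n-3$. For $k=2,3$ they always differ by at most $(3n^2-n)/2$ and $n^2(n+1)/2$, both below $2n^k$.

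\paragraph{The missing idea.} What is needed is a sign-change argument on partial sums; it also confirms that your syndrome does work. You already have the key identity $m^{(k)}_{i+1}-m^{(k)}_i=m^{(k-1)}_{i+1}$ in Step~2.

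Compare two candidates rather than solving for $(p,s)$. Take $\mathbf{x}\neq\mathbf{y}$ with equal syndromes and a common element of $\mathcal{M}_2(\mathbf{x})\cap\mathcal{M}_2(\mathbf{y})$, with deletion positions $p\le q$ (swap the two if needed). Let $\mathbf{y}^*$ be $\mathbf{x}$ with $x_p$ removed and $y_q$ inserted at position $q$. Then:
\begin{itemize}
\item $\mathbf{y}-\mathbf{y}^*$ has at most two nonzero entries, $\epsilon_1,\epsilon_2\in\{\pm1\}$, at positions $a_1<a_2$.
\item The partial sums $S_i=\sum_{j\le i}(y^*_j-x_j)$ equal $0$ for $i<p$, $x_{i+1}-x_p$ for $p\le i<q$, and $y_q-x_p$ for $i\ge q$. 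Hence $S_i\in\{0,\sigma_0\}$ with $\sigma_0=1-2x_p$, so $S$ has constant sign.
\end{itemize}
Put $T_i=\sum_{j\le i}(y_j-x_j)$. Equal weights give $T_n=0$. This excludes $\epsilon_1=\epsilon_2$, and it forces a lone spike to have sign $-\sigma_0$. So $T$ equals $S$ plus one of three things: nothing, a single step of sign $-\sigma_0$, or a bump of height $\epsilon_1$ on $[a_1,a_2)$. In every case $T$ has at most two sign changes.

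Summation by parts gives $\sum_iF(i)(y_i-x_i)=-\sum_{i=1}^{n-1}T_i\bigl(F(i+1)-F(i)\bigr)$ for every polynomial $F$ of degree at most $3$. Since $1,m^{(1)},m^{(2)},m^{(3)}$ span these polynomials, the equal integer syndromes say that $T$ is orthogonal to every quadratic. Choose a quadratic $G$ whose sign pattern matches that of $T$, with roots placed at its sign changes. Then $\sum_iT_iG(i)>0$ unless $T\equiv0$, i.e., unless $\mathbf{x}=\mathbf{y}$.

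Without this, or an equivalent explicit case analysis, your proposal does not establish the lemma.
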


Given a binary sequence $\mathbf{x}=(x_1,x_2,\dots,x_n)\in\Sigma_2^n$, assume that $t\mid n$, otherwise we append zeros at the very
end of each sequence such that its length is the smallest integer greater than $n$ and divisible by $t$.
Under this assumption, $\mathbf{x}$ can be represented by a $t\times (n/t)$ binary matrix as follows
\begin{align}\label{eqn:matrixt}
	\scalebox{0.9}{$ X_{t}=\left(\begin{array}{cccccc}
			X[1] \\
			X[2]\\
			\vdots\\
			X[t]
		\end{array}\right)=\left(\begin{array}{cccccc}
			x_1 & x_{t+1} &\cdots & x_{n-t+1}\\
			x_2 & x_{t+2} &\cdots & x_{n-t+2}\\
			\vdots & \vdots &\ddots &\vdots\\
			x_{t}& x_{2t} &\cdots&  x_{n}
		\end{array}\right)\in\Sigma_2^{ t\times {n\over t}},
		$}\end{align}
where $X[i],i\in[t],$ denotes the $i$-th row of $X_{t}$.

\begin{observation}\label{fact:location}
	If $\mathbf{x}\in\Sigma_2^n$ undergoes one burst of $t$-deletions and one burst of $t$-substitutions, then each row of $X_t$ experiences one deletion and at most one substitution, respectively. Moreover, the error positions in the $i$-th row may precede those in the first row by at most one position, where $i\in[2,t]$.
\end{observation}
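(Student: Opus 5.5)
The plan is to verify Observation~\ref{fact:location} directly from the index map of the matrix representation \eqref{eqn:matrixt}. The entry $x_k$ sits in row $r(k)$ and column $c(k)$, determined by $k=(c(k)-1)t+r(k)$ with $r(k)\in[t]$. The key elementary fact is that any $t$ consecutive indices $k,k+1,\dots,k+t-1$ are pairwise distinct modulo $t$. Hence a burst of length $t$ meets every row exactly once.

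\textbf{Deletion burst.} Let the deleted block be $(x_i,\dots,x_{i+t-1})$ and write $i=(c-1)t+r$. The indices $i,\dots,(c-1)t+t$ lie in rows $r,\dots,t$ of column $c$. The remaining indices $ct+1,\dots,ct+r-1$ lie in rows $1,\dots,r-1$ of column $c+1$. Deleting $t$ consecutive symbols shifts every later symbol by exactly $t$ positions, i.e. by exactly one column in the same row. Therefore the received sequence, arranged as a $t\times(n/t-1)$ matrix, has each row equal to the corresponding row of $X_t$ with exactly one entry removed. The removed entry is in column $c$ for rows $i'\ge r$ and in column $c+1$ for rows $i'<r$.

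\textbf{Substitution burst.} The same decomposition applies to the substituted block $(x_j,\dots,x_{j+t-1})$ with $j=(c'-1)t+r'$. Row $i'$ has its substituted entry in column $c'$ if $i'\ge r'$ and in column $c'+1$ if $i'<r'$. Since each row has exactly one index in the block, each row suffers at most one substitution. It may suffer none, because $x'_k=x_k$ is allowed.

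\textbf{Combining and comparing rows.} Because the bursts are non-overlapping, I would treat the two cases $i+t\le j$ and $j+t\le i$ separately. In each case the column indices are tracked after the deletion shift, so that the substitution in each row is located at a consistent position of the received row. The comparison with the first row then follows from the column formulas above.
\begin{itemize}
\item Row $1$ has its error in column $c+1$ (deletion) or $c'+1$ (substitution), unless $r=1$ (resp.\ $r'=1$), in which case every row uses column $c$ (resp.\ $c'$).
\item Row $i'\in[2,t]$ has its error in column $c$ or $c+1$ (resp.\ $c'$ or $c'+1$).
\item Hence the error in row $i'$ sits in the same column as in row $1$, or exactly one column earlier.
\end{itemize}

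The only step needing care is the bookkeeping when the substitution burst lies after the deletion burst. There, positions in the received sequence are shifted by $t$, so the substituted entries appear one column earlier in every row simultaneously, and the relative offsets between rows are unchanged. I expect this index-shift check, together with the zero padding when $t\nmid n$ (which affects only trailing entries), to be the main, though routine, obstacle.
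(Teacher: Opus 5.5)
Your proof is correct and follows essentially the same reasoning the paper relies on. The paper states this observation without proof, treating it as immediate: $t$ consecutive indices meet each row of $X_t$ exactly once, and deleting $t$ symbols shifts later entries by exactly one column within the same row. Your column bookkeeping (column $c$ for rows $\ge r$, column $c+1$ for rows $<r$) is the same computation the paper uses in Case~I of the proof of Theorem~\ref{thm:1D1S}, where a deletion at column $k$ of $X[1]$ forces the burst to start in $[(k-2)t+2,(k-1)t+1]$.
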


Finally, we note that it is easy to construct an ECC capable of correcting one burst of $t$-edits in $\mathbf{x}$ by applying the binary single edit ECC proposed in \cite{levenshtein1966binary} with $\log n+1$ bits of redundancy to each row of $X_t$, yielding a total redundancy of $t(\log n + 1)$ for the resulting code. Consequently, the following lemma holds.
\begin{lemma}\label{lem:burst-edit}
	For any binary sequence $\mathbf{x}\in\Sigma_2^n$, there exists a function $g_b: \Sigma_2^n\rightarrow \Sigma_2^{t\log n+t}$, such that given $g_b(\mathbf{x})$ and $\mathbf{x}'\in\mathcal{B}^E(\mathbf{x})$, one can uniquely recover $\mathbf{x}$.
\end{lemma}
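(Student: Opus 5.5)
The plan is to reduce the statement to the single-edit case row by row in the matrix $X_t$ of (\ref{eqn:matrixt}). For each row $X[i]$, $i\in[t]$, I will append the redundancy of a binary single-edit code: the Levenshtein (VT-type) construction of \cite{levenshtein1966binary}. This construction corrects one deletion, one insertion, or one substitution in a length-$N$ sequence using $\log N+1$ bits, for example via the syndrome $\sum_j j\,x_j \bmod (2N+1)$ or its variant together with parity. With $N=n/t\le n$, each row needs at most $\log n+1$ bits, so concatenating the $t$ row syndromes gives $g_b(\mathbf{x})$ of length $t\log n+t$ bits.

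The key step is to show that a burst of $t$ edits in $\mathbf{x}$ induces exactly one edit of the same type in each row of $X_t$, and that the received sequence $\mathbf{x}'$ can be split back into rows consistently. The three cases work as follows.
\begin{itemize}
\item \emph{Burst of $t$ deletions at positions $[p,p+t)$.} These positions cover every residue class mod $t$ exactly once, so each row loses exactly one symbol. The received sequence has length $n-t$, which is divisible by $t$. The received rows are obtained by reading $\mathbf{x}'$ column-wise into a $t\times(n/t-1)$ matrix. Position $k$ of $\mathbf{x}'$ comes from position $k$ of $\mathbf{x}$ if $k<p$ and from position $k+t$ otherwise, and shifting by $t$ preserves the residue mod $t$. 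Hence row $i$ of the received matrix is exactly $X[i]$ with one deletion.
\item \emph{Burst of $t$ insertions.} The argument is symmetric: the received length is $n+t$ and each row gains exactly one symbol.
\item \emph{Burst of $t$ substitutions.} The length is unchanged and each row has at most one substitution.
\end{itemize}

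The decoder first determines the error type from the length of $\mathbf{x}'$: $n-t$, $n+t$, or $n$. It then reshapes $\mathbf{x}'$ into a matrix with $t$ rows, decodes each row independently from its stored syndrome, and reassembles $\mathbf{x}$.

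The only delicate step is padding when $t\nmid n$. I handle it by fixing the padding convention: pad $\mathbf{x}$ to length $n'$ before encoding. The padded zeros may themselves be hit by the burst, so after the burst the decoder appends zeros to $\mathbf{x}'$ until its length is $n'$, $n'-t$, or $n'+t$ respectively. This recovers the channel output of the padded word, because the appended zeros sit at the end and the padded tail of the true word is known to be zero. I will verify this bookkeeping carefully; it is the main potential obstacle. Alternatively, one can simply assume $t\mid n$, as the paper does.
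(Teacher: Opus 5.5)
Your proposal is correct and follows the paper's argument: the paper also applies Levenshtein's single-edit code, with $\log n+1$ bits of redundancy, to each of the $t$ rows of $X_t$ and concatenates the results to get $t\log n+t$ bits. Your row-by-row checks make explicit what the paper leaves implicit. Shifts by $t$ preserve residues mod $t$, and the error type can be read off from $|\mathbf{x}'|$. Appending zeros to $\mathbf{x}'$ commutes with a burst confined to $\mathbf{x}$, so the padded positions are never actually hit. Note also that taking the modulus $2N$ rather than $2N+1$ is what gives exactly $\log N+1$ bits per row.
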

%\begin{remark}
%	More recently, one burst of $t$-edit ECCs with $\log n + o(\log n)$ bits of redundancy were introduced in \cite{sun2025codes}, requiring that $\mathbf{x}$ is a codeword of a $P$-bounded one burst of $t$-deletion ECC and $(X[1],X[2],\dots,X[t])$ forms a strong $(\ell,\epsilon)$-locally balanced sequence, where $X[i]$ denotes the $i$-th row of $X_t$. However, these codes are not effective for the proposed framework in this paper, as they incur significantly higher redundancy.
%\end{remark}

\section{One Burst of $t$-Deletion One Burst of $t$-Substitution ECCs}\label{sec:1D1S}
In this section, ECCs capable of correcting one burst of $t$-deletions together with one burst of $t$-substitutions are developed by means of a matrix form of the original sequence.% The approximate positions of the two bursts are then identified by applying an ECC capable of correcting one deletion together with one substitution to the first row of the matrix. Subsequently, by employing a one burst of $t$-edit ECC to each substring of the original sequence according to certain designed rules, the targeted errors can be transformed to some erasures or substitutions, which are then corrected by a Generalized Reed-Solomon (GRS) code.

%We highlight that one burst of $t$-substitutions considered in this paper is specifically refers to a burst of complement errors, rather than at most $t$-substitutions as considered in the channel models of \cite{sun2024binary,sun2025codes}. %The resulted  codes are obtained  following the $1$-deletion $1$-substitution ECCs introduced in Lemma \ref{lem:mix}.

As shown in Equation \eqref{eqn:matrixt} and Observation \ref{fact:location},  representing  any binary sequence $\mathbf{x}\in\Sigma_2^n$ in its matrix form $X_t$ reveals that one burst of $t$-deletions together with one burst of $t$-substitutions in $\mathbf{x}$ cause one deletion and at most one substitution in $X[1]$.
In the sequel, we illustrate two different cases following the error types in $X[1]$.

In the first case, we assume that one deletion together with one substitution occur within $X[1]$. Under this assumption, the following lemma can be established. %Note that it is reasonable to assume that  the deletion always occurs after the substitution from the Lemma 11 of \cite{sun2024binary}.

\begin{lemma}\label{lem:1d1s}
	Let $\mathbf{x}\in\Sigma_2^n$ be a $d$-regular sequence. If $\mathbf{x}'$ is obtained from $\mathbf{x}$ by  one deletion together with one substitution, then one of the following conclusions holds:
	\begin{itemize}
		\item[a)] There exist two distinct runs $\mathbf{x}_i$ and $\mathbf{x}_j$ with $i,j\in[r(\mathbf{x})]$,  such that the substitution occurs in $\mathbf{x}_i$ and the deletion occurs in $\mathbf{x}_j$, respectively.
		\item[b)] There exists a substring $\mathbf{x}_{\mathcal{I}}$ with $|\mathcal{I}|\le 3d\log n$, such that the one deletion and one substitution occur in $\mathbf{x}_{\mathcal{I}}$.
	\end{itemize}
\end{lemma}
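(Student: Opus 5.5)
The idea is to exploit the ambiguity in where the errors can be placed. A deletion is only determined up to the run containing it: deleting any symbol of a run gives the same output. Likewise, the combined effect of a deletion and a substitution can often be realized by several pairs of positions. So we fix one representative pair of positions $(p,q)$, with $p$ the substitution position and $q$ the deletion position, that produces $\mathbf{x}'$ from $\mathbf{x}$. Since deleting any bit of a run yields the same sequence, we may assume $q$ is any chosen position of its run, say the last one. The proof then splits on whether $p$ and $q$ lie in the same run of $\mathbf{x}$ or in different runs.

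\emph{Different runs.} If $x_p$ lies in run $\mathbf{x}_i$ and $x_q$ lies in run $\mathbf{x}_j$ with $i\neq j$, conclusion a) holds immediately.

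\emph{Same run.} Suppose $p$ and $q$ both lie in one run $\mathbf{x}_k$. By Observation \ref{observation}, every run of the $d$-regular sequence $\mathbf{x}$ has length at most $d\log n$. Take $\mathcal{I}$ to be the index set of $\mathbf{x}_k$ itself. Both errors then occur inside $\mathbf{x}_{\mathcal{I}}$ with $|\mathcal{I}|\le d\log n\le 3d\log n$, so conclusion b) holds.

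The subtle point is what ``the substitution occurs in $\mathbf{x}_i$'' means once the substitution has merged or split runs. Because we argue about the positions $(p,q)$ in the original $\mathbf{x}$, the dichotomy above is exhaustive, and the lemma follows almost directly.

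The slack factor $3$ presumably covers an intended stronger reading, where the authors want to be able to choose a representative in which the two errors are either far apart in distinct runs or confined to a short window. If that stronger reading is required, I would argue as follows. When a substitution at the boundary of a run interacts with a nearby deletion, the composite error can be shifted along an alternating substring. By Observation \ref{observation}, alternating substrings also have length at most $d\log n$. A window made of the run or alternating substring containing $p$, the one containing $q$, and an intervening segment of length at most $d\log n$ then gives the bound $3d\log n$. This alternating-substring shifting, needed to show that any ambiguous configuration fits in such a window, is where I would expect the main technical effort to lie.
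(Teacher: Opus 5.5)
Your case split is valid for the statement read literally, but under that reading the lemma says nothing, and it is not what the paper proves or later uses. The paper's proof compares \emph{two} error patterns that produce the same $\mathbf{x}'$: substitution at $i_1$ with deletion at $i_2$, versus substitution at $j_1$ with deletion at $j_2$. It shows that either $i_1=j_1$ and $i_2,j_2$ lie in one run, or all four positions fall in a window of length at most $3d\log n$. Theorem~\ref{thm:1D1S} relies on exactly this identifiability when it says the decoder, knowing only $X[1]$ and $X'[1]$, can find the runs $i_1,i_2$ or the short interval $\mathcal{J}$. Your dichotomy concerns one arbitrarily fixed representative $(p,q)$, which the decoder never sees. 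It would stay true even if two far-apart patterns gave the same output, and in that situation the decoder would mislocate the bursts. So the step ``the dichotomy is exhaustive, and the lemma follows'' does not deliver the content the lemma is needed for.

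Your fallback sketch does not fill this in, and it points to the wrong mechanism. Alternating substrings are not the source of ambiguity here. Moving the deletion across an alternating stretch flips every output symbol of that stretch, and the substitutions can absorb only a bounded number of such flips. What is missing is a direct symbol-by-symbol comparison of the two outputs after ordering the four indices. If $i_1\le j_1<i_2\le j_2$, position $i_1$ gives $\bar{x}_{i_1}=x_{i_1}$ unless $i_1=j_1$, and positions $[i_2,j_2-1]$ give $x_{i_2}=\cdots=x_{j_2}$. So the substitution position and the deletion run are forced, which is conclusion a). If $i_2<i_1\le j_1<j_2$, the comparison gives $x_{i_2}=\cdots=x_{i_1-1}=\bar{x}_{i_1}$, $x_{i_1}=\cdots=x_{j_1}$, and $\bar{x}_{j_1}=x_{j_1+1}=\cdots=x_{j_2}$. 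Hence $[i_2,j_2]$ is covered by three runs, each of length at most $d\log n$ by Observation~\ref{observation}. These three runs, not a run plus an alternating substring plus an unexplained intervening segment, account for the factor $3$ in conclusion b). The remaining orderings are handled the same way or lead to a contradiction.
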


\begin{proof}
	Assume that $\mathbf{x}'$ is obtained from $\mathbf{x}$ by  one substitution and one deletion at the $i_1$-th and $i_2$-th components, respectively.
	If there exist two indices $j_1,j_2\in[n]$ such that $\mathbf{x}'$ can also be obtained from $\mathbf{x}$ by  one substitution and one deletion at the $j_1$-th and $j_2$-th components, respectively, then, by symmetry, we consider the following cases.
	
	(i) $i_1\le j_1<i_2\le j_2$. In this case, we have
	\begin{align*}
		\mathbf{x}'=&x_1\cdots x_{i_1-1}\overline{x}_{i_1}x_{i_1+1}\cdots x_{j_1-1}x_{j_1}x_{j_1+1}\cdots x_{i_2-1}x_{i_2+1}\\
		&x_{i_2+2}\cdots ~~x_{j_2}~x_{j_2+1}\cdots x_n \\
		=&x_1\cdots x_{i_1-1}x_{i_1}x_{i_1+1}\cdots x_{j_1-1}\overline{x}_{j_1}x_{j_1+1}\cdots x_{i_2-1}~~x_{i_2}~\\
		&x_{i_2+1}\cdots x_{j_2-1}x_{j_2+1}\cdots x_n,
	\end{align*}
	where $\overline{x}_{i}$, $i\in\{i_1,j_1\}$ represents the substitution error in $\mathbf{x}$.
	Thus, by comparing the symbols of the corresponding positions, we obtain that when  $i_1\ne j_1$,
	\begin{align}\label{eqn:i1j1}
	x_{i_1}=\overline{x}_{i_1},x_{j_1}=\overline{x}_{j_1},
	\end{align}
	and
	\begin{align}\label{eqn:i2=j2}
		x_{i_2}=x_{i_2+1}=\cdots= x_{j_2-1}=x_{j_2}.
	\end{align}
Note that the identity \eqref{eqn:i1j1} is impossible, which means $i_1=j_1$, i.e., the substitutions in $\mathbf{x}$ should occur at the same component.
	By \eqref{eqn:i2=j2}, $\mathbf{x}_{[i_2,j_2]}$ is a run of $\mathbf{x}$.
	Denoting $\mathbf{x}_i=\mathbf{x}_{[i_1]}$ and $\mathbf{x}_j=\mathbf{x}_{[i_2,j_2]}$,
	Conclusion a) holds.

	(ii) $i_2< i_1\le j_1< j_2$. In this case, we have		
	\begin{align*}
		\mathbf{x}'=&x_1\cdots x_{i_2-1}x_{i_2+1}\cdots  x_{i_1-1}~~\overline{x}_{i_1}~x_{i_1+1}\cdots ~~x_{j_1}~x_{j_1+1}\\
		&x_{j_1+2}\cdots ~~x_{j_2}~x_{j_2+1}\cdots x_n\\
		=&x_1\cdots x_{i_2-1}~~x_{i_2}~\cdots x_{i_1-2}x_{i_1-1}~~x_{i_1}~\cdots x_{j_1-1}~~\overline{x}_{j_1}~\\
		&x_{j_1+1}\cdots x_{j_2-1}x_{j_2+1}\cdots  x_n.
	\end{align*}

Since $i_1\le j_1$, we can obtain 
$
x_{i_2}=x_{i_2+1}=\cdots= x_{i_1-1}=\overline{x}_{i_1}$,
$x_{i_1}=x_{i_1+1}=\cdots= x_{j_1-1}=x_{j_1}$, and
$\overline{x}_{j_1}=x_{j_1+1}=\cdots=x_{j_2-1}=x_{j_2}$.	
%\end{align*}

	Thus, $\mathbf{x}_{[i_2,i_1-1]}$, $\mathbf{x}_{[i_1,j_1]}$ and $x_{[j_1+1,j_2]}$ are three runs of $\mathbf{x}$.
	Denoting $\mathcal{I}=[i_2,j_2]$, we know $|\mathcal{I}|\le 3d\log n$  with Observation \ref{observation}.
	Consequently, Conclusion b) holds.

	(iii)
	$i_1<i_2\le j_1<j_2$. In this case, we have
	\begin{align*}
		\mathbf{x}'=&x_1\cdots x_{i_1-1}\overline{x}_{i_1}x_{i_1+1}\cdots  x_{i_2-1}x_{i_2+1}\cdots ~~x_{j_1}~x_{j_1+1}\cdots\\
		& ~x_{j_2}~x_{j_2+1}\cdots x_n\\
		=&x_1\cdots x_{i_1-1}x_{i_1}x_{i_1+1}\cdots x_{i_2-1}~~x_{i_2}~\cdots x_{j_1-1}~~\overline{x}_{j_1}~\cdots\\ &x_{j_2-1}x_{j_2+1}\cdots x_n.
	\end{align*}
	Thus, we can obtain $\overline{x}_{i_1}=x_{i_1}$, and
	$
		x_{i_2}=x_{i_2+1}=\cdots= x_{j_1-1}=x_{j_1}$ and
		$\overline{x}_{j_1}=x_{j_1+1}=\dots=x_{j_2-1}=x_{j_2}$.
	%\end{align*}
	It is clear that $\overline{x}_{i_1}=x_{i_1}$ is not admissible.
	
	Using a similar line of argument, by comparing the symbols of the corresponding positions,  we can prove Conclusion a) holds for the cases that $i_1\le  j_1<j_2\le  i_2$, 	$i_2\le j_2< i_1\le j_1$ and $i_2\le j_2< j_1\le i_1$, while Conclusion b) holds for the case of $i_2\le j_1<i_1\le j_2$. Moreover, the remaining configurations $i_1\le j_2< i_2\le j_1$, $i_1\le j_2< j_1\le i_2$, $i_1<i_2\le j_2<j_1$,	$i_2\le  j_1< j_2\le  i_1$ and	$i_2<i_1\le j_2<j_1$ are not admissible under the considered error model. This completes the proof.
\end{proof}

In the second case, we consider the scenario in which no substitution occurs in $X[1]$, and hence $X[1]$ only experiences a single deletion. The following lemma then holds.

\begin{lemma}\label{lem:1d}
	For any binary sequence $\mathbf{x}\in\Sigma_2^n$, if $\mathbf{x}'$ is obtained from $\mathbf{x}$ by  one deletion, then one can find a unique run $\mathbf{x}_i$,  such that the deletion occurs in $\mathbf{x}_{i},i\in[r(\mathbf{x})]$.
\end{lemma}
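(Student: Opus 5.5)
The plan is to mirror the comparison-of-positions argument used for Lemma \ref{lem:1d1s}, now in the simpler setting of a single deletion. Suppose $\mathbf{x}'$ is obtained from $\mathbf{x}$ by deleting the $i$-th component. Suppose it can also be obtained by deleting the $j$-th component. By symmetry we may assume $i\le j$. Writing out the two expressions
\begin{align*}
\mathbf{x}'&=x_1\cdots x_{i-1}x_{i+1}\cdots x_{j}x_{j+1}\cdots x_n\\
&=x_1\cdots x_{i-1}x_{i}\cdots x_{j-1}x_{j+1}\cdots x_n,
\end{align*}
and comparing symbols position by position on the middle block, we get $x_{k+1}=x_k$ for every $k\in[i,j-1]$. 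Hence $x_i=x_{i+1}=\cdots=x_j$, so $\mathbf{x}_{[i,j]}$ is contained in a single run of $\mathbf{x}$.

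We then define the run by taking the run $\mathbf{x}_{i^*}$ of $\mathbf{x}$ containing position $i$. The previous step shows that every admissible deletion position lies in this same run. So the set of positions whose deletion yields $\mathbf{x}'$ is a subset of one run. Conversely, deleting any element of a run yields the same sequence, so this set is in fact the whole run $\mathbf{x}_{i^*}$. This gives existence of the run. It also gives uniqueness: two different runs both ``containing the deletion'' would force two admissible deletion positions in distinct runs, contradicting the comparison above.

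Finally, I would note that the run is determined by $\mathbf{x}$ and $\mathbf{x}'$ alone, independent of which representative deletion position is chosen. This is the sense in which the run is unique. If one wishes to phrase it in terms of $\mathbf{x}'$ alone, the run index $i^*$ can be read off: it is the index of the run of $\mathbf{x}'$ that is one shorter than in $\mathbf{x}$, since deleting a bit never merges runs unless the run has length one. In that case the neighbouring runs merge, which still determines $i^*$ uniquely by counting.

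The main (mild) obstacle is purely one of formulation: making precise what ``unique'' means, namely that all deletion positions consistent with $(\mathbf{x},\mathbf{x}')$ lie in one run. Once that is fixed, the index comparison is routine. No regularity assumption is needed, unlike in Lemma \ref{lem:1d1s}.
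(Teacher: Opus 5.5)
Your proof is correct and follows essentially the same route as the paper: take two admissible deletion positions $i<j$, compare the two expressions for $\mathbf{x}'$ symbol by symbol to get $x_i=x_{i+1}=\cdots=x_j$, and conclude that both positions lie in a single run. Your extras are harmless but not needed for the lemma: the converse (every position in that run is admissible, so the set of admissible positions is exactly the run) and the remark on reading off the run index.
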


\begin{proof}
	Assume that $\mathbf{x}'$ is obtained from $\mathbf{x}$ by one deletion at the $i_1$-th component, where $i_1\in[n]$.
	If there exists another index $i_2\in[n]$ such that $\mathbf{x}'$ can also be obtained from $\mathbf{x}$ by one deletion at the $i_2$-th component, without loss of generality, letting $i_1< i_2$, we have
	\begin{align*}
		\mathbf{x}'=&x_1\cdots x_{i_1-1}x_{i_1+1}\cdots ~~x_{i_2}~x_{i_2+1}\cdots x_n \\
		=&x_1\cdots x_{i_1-1}~~x_{i_1}~\cdots x_{i_2-1}x_{i_2+1}\cdots x_n.
	\end{align*}
	Thus, by comparing the symbols of the corresponding positions, we obtain
	$x_{i_1}=x_{i_1+1}=\cdots= x_{i_2-1}=x_{i_2}$.
	Denoting $\mathbf{x}_i=\mathbf{x}_{[i_1,i_2]}$, the proof is completed.
\end{proof}

Denote the $j$-th run of $X[1]$ as
\begin{equation}\label{eqn:def_pj}
X[1]_{j}=X[1]_{[p_{j-1}+1,p_j]},j\in[r(X[1])],
\end{equation}
and some intervals as
\begin{align}\label{eqn:I_i}
	\mathcal{I}_j=[p_{j-1}t+1,p_{j}t],\quad j\in[r(X[1])].
\end{align}
For convenience, we assign $p_0=0$ and $p_{r(X[1])}=n/t$.

Theorem \ref{thm:1D1S} is a direct consequence of Lemmas \ref{lem:1d1s} and \ref{lem:1d}.
\begin{theorem}\label{thm:1D1S}
	For a given binary sequence $\mathbf{x}\in\Sigma_2^n$, let $X_t$ be the matrix form of $\mathbf{x}$ following \eqref{eqn:matrixt} and $g_{ds}(\mathbf{x})$ be the function defined in Lemma \ref{lem:mix}.
	Supposing that both $X[1]\in\mathcal{C}_{DS}^P$ and $\phi(X[1])$ are strong $(\ell,\epsilon)$-locally balanced sequences, if $\mathbf{x}'$ is obtained from  $\mathbf{x}$ by one burst of $t$-deletions and one burst of $t$-substitutions, one of the following conclusions holds:
	\begin{itemize}
		\item [a)] If there exist one deletion together with one substitution in $X[1]$, one can find two intervals $\mathcal{J}_1\subseteq\mathcal{I}_{i_1-1}\cup \mathcal{I}_{i_1}$, $\mathcal{J}_2\subseteq\mathcal{I}_{i_2-1}\cup \mathcal{I}_{i_2}$ such that
		one burst of $t$-deletions occurs at $\mathbf{x}_{\mathcal{J}_1}$ and one burst of $t$-substitutions occurs at $\mathbf{x}_{\mathcal{J}_2}$, respectively.
		\item[b)] If there exists only one deletion in $X[1]$, one can find an interval $\mathcal{J}_1\subseteq\mathcal{I}_{i_1-1}\cup \mathcal{I}_{i_1}$ such that
		one burst of $t$-deletions occurs at $\mathbf{x}_{\mathcal{J}_1}$.
		\item [c)] One can find an interval $\mathcal{J}\subseteq[n]$ with $|\mathcal{J}|\le 3dt\log(n/ t)+t-1$ such that the one  burst of $t$-deletions and one burst of $t$-substitutions occur at $\mathbf{x}_{\mathcal{J}}$.
	\end{itemize}
\end{theorem}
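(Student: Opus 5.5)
We first recover $X[1]$, then use its run structure to localize the two bursts in $\mathbf{x}$. By Observation \ref{fact:location}, the received sequence $\mathbf{x}'$, written as a $t\times(n/t-1)$ matrix, has first row $X'[1]$ obtained from $X[1]$ by one deletion and at most one substitution. Since $X[1]\in\mathcal{C}_{DS}^P$ and both $X[1]$ and $\phi(X[1])$ are strongly $(\ell,\epsilon)$-locally balanced, Lemma \ref{lem:mix} lets us recover $X[1]$ uniquely from $g_{ds}(\mathbf{x})$ and $X'[1]$. (If no substitution hits $X[1]$, we view the pair as lying in $\mathcal{M}_2(X[1])$ with a trivial substitution, or simply note that the decoder still returns $X[1]$.) By Lemma \ref{lem:regular} with $\ell=d\log(n/t)$, $X[1]$ is $d$-regular, so Observation \ref{observation} bounds every run of $X[1]$ by $d\log(n/t)$.

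With $X[1]$ and $X'[1]$ known, we split into the cases of the theorem. If $X[1]$ suffered only a deletion, Lemma \ref{lem:1d} gives a unique run $X[1]_{i_1}=X[1]_{[p_{i_1-1}+1,p_{i_1}]}$ containing the deletion; we can identify it by comparing $X[1]$ with $X'[1]$. If both errors hit $X[1]$, Lemma \ref{lem:1d1s}, applied to the $d$-regular sequence $X[1]$, yields either conclusion a), namely runs $X[1]_{i_1}$ (deletion) and $X[1]_{i_2}$ (substitution) determined up to the ambiguity allowed by the lemma, or conclusion b), namely a substring $X[1]_{\mathcal{I}}$ with $|\mathcal{I}|\le 3d\log(n/t)$ containing both errors. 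We need to check that the decoder can compute these runs and this substring from $X[1]$ and $X'[1]$ alone. It can: the set of all (substitution, deletion) position pairs mapping $X[1]$ to $X'[1]$ is computable, and the case analysis in Lemma \ref{lem:1d1s} shows that this set is confined to two runs or to one short window.

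The final step translates row positions into positions in $\mathbf{x}$. Column $k$ of $X_t$ corresponds to the indices $[(k-1)t+1,kt]$ of $\mathbf{x}$, so the run $X[1]_j$ corresponds to $\mathcal{I}_j$. Suppose the burst affects column $k$ in row $1$. By Observation \ref{fact:location}, the errors in rows $2,\dots,t$ lie in column $k$ or $k-1$, so the whole burst lies in positions $[(k-2)t+2,kt]$. For $k$ in run $i$, this window lies in $\mathcal{I}_{i-1}\cup\mathcal{I}_i$, which gives $\mathcal{J}_1$ and $\mathcal{J}_2$ in cases a) and b). In case c), a window of at most $3d\log(n/t)$ columns, extended by at most one column to the left, becomes an interval $\mathcal{J}$ of $\mathbf{x}$ with $|\mathcal{J}|\le 3dt\log(n/t)+t-1$.

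The main obstacle is the boundary bookkeeping. We must verify that the one-column shift for rows $i\ge2$ costs exactly the extra $t-1$ positions in case c), and only the adjacent interval $\mathcal{I}_{i-1}$ in cases a) and b). We must also make sure that the ambiguity in Lemma \ref{lem:1d1s} a), where a deletion could be placed anywhere in a run, is absorbed by taking the entire run, so that the decoder does not need to know which position was actually deleted.
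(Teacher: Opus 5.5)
Your proposal is correct and follows the paper's proof essentially step for step: you recover $X[1]$ via Lemma~\ref{lem:mix}, obtain $d$-regularity from Lemma~\ref{lem:regular}, split into cases by Lemma~\ref{lem:1d} and Lemma~\ref{lem:1d1s}, and turn a row-$1$ error in column $k$ into the window $[(k-2)t+2,kt]$. This gives exactly the paper's intervals $\mathcal{J}_1=[(p_{i_1-1}-1)t+2,p_{i_1}t]$ (and likewise $\mathcal{J}_2$) and $\mathcal{J}=[(i-2)t+2,i't]$ with $|\mathcal{J}|\le 3dt\log(n/t)+t-1$; your extra remark that the decoder can compute the runs or window from $X[1]$ and $X'[1]$ alone is not in the paper, which simply splits on the actual error pattern, and it would need more care, since Lemma~\ref{lem:1d1s} as proved only relates pairs of error patterns.
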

\begin{proof}
Assume that $X'_t$ is the matrix form of $\mathbf{x}'$. By Observation \ref{fact:location}, if $\mathbf{x}'\in\mathcal{B}_{1,1}^{DS}(\mathbf{x})$, then it follows that $X'[1]\in\mathcal{M}_{2}(X[1])$.
	According to Lemma \ref{lem:mix}, the sequence $X[1]$ can be uniquely recovered from $g_{ds}(X[1])$ and $X'[1]$  since both $X[1]\in\mathcal{C}_{DS}^P$ and $\phi(X[1])$ are strong $(\ell,\epsilon)$-locally balanced sequences.
Furthermore, by Lemma \ref{lem:regular}, it is clear that $X[1]$ is a $d$-regular sequence of length $n/t$. We distinguish three different cases based on Lemma \ref{lem:1d1s} and Lemma \ref{lem:1d} in the following.

	{\bf Case I:} Lemma \ref{lem:1d} holds. Assume that one deletion occurs at the $i_1$-th run of $X[1]$.
	Note that if the deletion occurs at the $i$-th symbol of $X[1]$, then the corresponding burst should start
at the $j$-th symbol with $j\in [(i-2)t+2,(i-1)t+1]$.
Thus, by \eqref{eqn:matrixt}, \eqref{eqn:def_pj}, and \eqref{eqn:I_i}, we have that the burst of $t$-deletions occurs at
$\mathbf{x}_{\mathcal{J}_1}\subseteq \mathbf{x}_{\mathcal{I}_{i_1-1}\cup \mathcal{I}_{i_1}}$,
	where $\mathcal{J}_1=[(p_{i_1-1}-1)t+2,p_{i_1}t]$. Then Conclusion b) follows.
	
    {\bf Case II:}	Lemma \ref{lem:1d1s}-a) holds. In this case, assume the deletion and substitution occur at the $i_1$-th and $i_2$-th run of $X[1]$, respectively.
	Similar to Case I, Conclusion a) follows by \eqref{eqn:def_pj} and \eqref{eqn:I_i}, i.e., the burst of $t$-deletions in $\mathbf{x}$ occurs at
	$
		\mathbf{x}_{\mathcal{J}_1}\subseteq \mathbf{x}_{\mathcal{I}_{i_1-1}\cup \mathcal{I}_{i_1}},
	$
	and the burst of $t$-substitutions in $\mathbf{x}$ occurs at
	$
		\mathbf{x}_{\mathcal{J}_2}\subseteq \mathbf{x}_{\mathcal{I}_{i_2-1}\cup \mathcal{I}_{i_2}},
	$
	where $
		\mathcal{J}_1={[(p_{i_1-1}-1)t+2,p_{i_1}t]}$,
		$\mathcal{J}_2=[(p_{i_2-1}-1)t+2,p_{i_2}t]$.

	{\bf Case III:} Lemma \ref{lem:1d1s}-b) holds. Let 
	$$\mathcal{I}=[i,i']\subseteq\left[{n\over t}\right],|\mathcal{I}|\le 3d\log \left({n\over t}\right).$$
	Then the burst of $t$-deletions and burst of $t$-substitutions should occur in $\mathbf{x}_{\mathcal{J}}$ according to \eqref{eqn:matrixt}, \eqref{eqn:def_pj}, and \eqref{eqn:I_i}, where
	$$\mathcal{J}=[(i-2)t+2,i't].$$
	It is clear that
	$$|\mathcal{J}|\le 3dt\log \left({n\over t}\right)+t-1,$$	
i.e., Conclusion c) follows, which completes the proof.
\end{proof}

In what follows, we describe the explicit constructions of ECCs for correcting one burst of $t$-deletions and one burst of $t$-substitutions.
	For any binary sequence $\mathbf{x}\in\Sigma_2^n$ with both $X[1]\in\mathcal{C}_{DS}^P$ and $\phi(X[1])$ being strong $(\ell,\epsilon)$-locally balanced sequences, applying a one burst of $t$-edit ECC $g_b(\cdot)$ presented in Lemma \ref{lem:burst-edit} to each substring $\mathbf{x}_{\mathcal{I}_j}$ of $\mathbf{x}$,  we can obtain
\begin{align*}
	\bar{g}_b(\mathbf{x})\triangleq\left(g_b(\mathbf{x}_{\mathcal{I}_1}),g_b(\mathbf{x}_{\mathcal{I}_2}),\dots,g_b(\mathbf{x}_{\mathcal{I}_{r(X[1])}})\right),
\end{align*}
where $\mathcal{I}_j,j\in[r(X[1])],$ is defined in \eqref{eqn:I_i}.
Since both $X[1]$ and $\phi(X[1])$ are two strong $(\ell,\epsilon)$-locally balanced sequences, from Lemma \ref{lem:regular} and Observation \ref{observation}, we have  $|\mathcal{I}_i|\le dt\log (n/t),i\in[r(X[1])]$.
By Lemma \ref{lem:burst-edit}, viewing $g_b(\cdot)$ as a binary representation of an integer,  we can write
\begin{align*}
	|g_b(\cdot)| &\le t\log (dt\log (n/t))+t\nonumber\\\
	&=t\log\log (n/t)+t\log dt+t.
\end{align*}
In other words, the length of $g_b(\cdot)$ is $O(\log\log n)$.
%Based on this, the explicit construction of one burst of $t$-deletion one burst of $t$-substitution ECC can be developed as follows.

Let $\rho=3d\log({n/ t})+1$ be a positive integer, and $\mathcal{L}_j,j\in[\lceil n/\rho\rceil-1],$ be intervals of length $2\rho$, where
\begin{align}\label{eqn:interval}
	\hspace{-0.1cm}\mathcal{L}_j=\left\{\begin{array}{ll}
		[(j-1)\rho+1,(j+1)\rho], & \hspace{-0.3cm}\text{for~~} j\in[\lceil n/\rho\rceil -2],\\
		{[(j-1)\rho+1,n]},&\hspace{-0.3cm} \text{for~~} j=\lceil n/\rho\rceil -1.
	\end{array}\right.
\end{align}

\begin{construction}\label{con3}
	For any $\mathbf{x}\in\Sigma_2^n$ with both $X[1]\in\mathcal{C}_{DS}^P$ and $\phi(X[1])$ being strong $(\ell,\epsilon)$-locally balanced sequences,
	define
	\begin{align}\label{eqn:bar-h}
		h^{(a)}_i(\mathbf{x})\triangleq\sum_{j\in[\lceil n/\rho\rceil-1]\atop j\equiv a \bmod 2} g'_{ds}\big(X[i]_{\mathcal{L}_j}\big) \pmod{N},
	\end{align}
	where $a\in\{0,1\},i\in[2,t]$ and $N=2^{6\log (2\rho)+5}$.
	
	Then, let
	\begin{align*}
		\begin{aligned}	&f_{ds}(\mathbf{x})=\\
				&\left\{g_{ds}\big(X[1]\big),(\bar{g}_b(\mathbf{x}),\bm 0),h^{(a)}_i(\mathbf{x}),
				a\in\{0,1\},i\in[2,t]\right\},
			\end{aligned}\end{align*}
where $(\bar{g}_b(\mathbf{x}),\bm 0)\in \mathbb{F}^n_q$ with $q\geq n$ being the smallest prime greater than $n$.	
	For some fixed parameters $g\in[0,2^{4\log n+12\log\log n+O(1)}),\delta\in \mathbb{F}^n_q$ and $h_{i,a}\in[0,N), a\in\{0,1\},i\in[2,t]$, a one burst of $t$-deletion one burst of $t$-substitution ECC can be defined by
	\begin{align*}
		\mathcal{C}^{\delta}_{ds}\triangleq\Big\{\mathbf{x}\in\Sigma_2^n: &~\text{Both $X[1]\in\mathcal{C}_{DS}^P$ and $\phi(X[1])$ are strong}\\
		&\text{$(\ell,\epsilon)$-locally balanced sequences},\\
		&f_{ds}(\mathbf{x})=(g,\mb{r},h_{i,a}),
		a\in\{0,1\},i\in[2,t],\\
 &\text{ with }\mb{r}\in \mathcal{C}_0+\delta\Big\},
	\end{align*}
where $\mathcal{C}_0$ is an $[n,n-6,d_{\mathrm{min}}=7]_q$ Reed-Solomon (RS) code and $\mathcal{C}_0+\delta\triangleq\{\mathbf{c}+\delta:\mb{c}\in\mathcal{C}_0\}$.
\end{construction}

\begin{theorem}\label{thm:Tt}	
	The code $\mathcal{C}_{ds}$ generated by Construction  \ref{con3} can correct one  burst of $t$-deletions and one burst of $t$-substitutions with at most $10\log n+12t\log\log n+O(1)$
	bits of redundancy.
\end{theorem}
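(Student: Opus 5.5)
The proof has two parts: decoding and redundancy counting. For decoding, let $\mathbf{x}\in\mathcal{C}^{\delta}_{ds}$ and $\mathbf{x}'\in\mathcal{B}_{1,1}^{DS}(\mathbf{x})$. Write $\mathbf{x}'$ in matrix form. By Observation \ref{fact:location}, $X'[1]\in\mathcal{M}_2(X[1])$, so $g_{ds}(X[1])$ together with Lemma \ref{lem:mix} recovers $X[1]$. From $X[1]$ we obtain the run boundaries $p_j$ and the intervals $\mathcal{I}_j$ of \eqref{eqn:I_i}. Comparing $X[1]$ with $X'[1]$ then tells us which of the cases a), b), c) of Theorem \ref{thm:1D1S} occurs, and gives the corresponding intervals $\mathcal{J}_1,\mathcal{J}_2$ or $\mathcal{J}$.

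\textbf{Case c).} Both bursts lie in a window of length at most $3dt\log(n/t)+t-1$. In the row domain, every row $X[i]$ has its deletion and substitution inside a column window of length at most $3d\log(n/t)+1=\rho$. Any window of length $\rho$ is contained in some $\mathcal{L}_j$, which has length $2\rho$ and overlaps its neighbour by $\rho$. Fix the parity $a$ of that $j$. All other blocks $\mathcal{L}_{j'}$ with $j'\equiv a \bmod 2$ are disjoint from the error window, so the decoder can read them off directly: blocks to the left of the error are unshifted, and blocks to the right are shifted by one. Subtracting their $g'_{ds}$ values from $h^{(a)}_i$ modulo $N$ yields $g'_{ds}(X[i]_{\mathcal{L}_j})$. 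Because $N=2^{6\log(2\rho)+5}$ exceeds the range of $g'_{ds}$ on length-$2\rho$ inputs, this value is exact. Lemma \ref{lem:ds} then recovers $X[i]_{\mathcal{L}_j}$ and hence row $X[i]$, for each $i\in[2,t]$; row $X[1]$ is already known. The delicate point here is that the decoder does not know the exact $j$, only a candidate range, and the window of $X'[i]$ matching $\mathcal{L}_j$ must be aligned correctly. I would handle this with the one-position-offset clause of Observation \ref{fact:location} and by choosing $j$ from the known interval $\mathcal{J}$.

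\textbf{Cases a) and b).} These are the main obstacle. The deletion burst lies in $\mathcal{J}_1\subseteq\mathcal{I}_{i_1-1}\cup\mathcal{I}_{i_1}$, and the substitution burst, if present, lies in $\mathcal{J}_2$. Because the deletion position in $X[1]$ is known up to its run, the decoder can reconstruct the received version of every block $\mathbf{x}_{\mathcal{I}_j}$:
\begin{itemize}
\item blocks away from $\mathcal{J}_1$ and $\mathcal{J}_2$ are received correctly, only shifted by $t$ after the deletion;
\item the blocks touched by the deletion are at most two, $\mathcal{I}_{i_1-1}$ and $\mathcal{I}_{i_1}$, and may suffer a burst of $t$-edits or be treated as erasures;
\item likewise at most two blocks are touched by the substitutions.
\end{itemize}
So at most four symbols of the vector $(\bar g_b(\mathbf{x}),\bm 0)$ are unreliable. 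Those near the deletion are erasures at known locations, and the rest are unknown errors. Since $\mathbf{r}\in\mathcal{C}_0+\delta$ with minimum distance $7$, it tolerates any combination with $2e+s\le 6$. With $s\le 2$ erasures near the deletion and $e\le 2$ errors, we get $2e+s=6$, so the RS decoder recovers all $g_b(\mathbf{x}_{\mathcal{I}_j})$. Lemma \ref{lem:burst-edit} then corrects each affected block, since each sees at most one burst of $t$-edits relative to the correct block boundaries. The main thing to verify carefully is that a burst straddling $\mathcal{I}_{j-1}$ and $\mathcal{I}_j$ still induces only a single burst of $t$-edits in each block, after cutting $\mathbf{x}'$ at the boundaries inferred from $X[1]$ and $X'[1]$. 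In the worst case the affected blocks are instead marked as erasures, and the count $2e+s\le 6$ still has to be rechecked.

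\textbf{Redundancy.} The terms are:
\begin{itemize}
\item $g_{ds}$ contributes $4\log n+12\log\log n+O(1)$;
\item $\delta$ for the RS coset contributes $6\log q\approx 6\log n$;
\item the $2(t-1)$ values $h_{i,a}$ contribute $2(t-1)(6\log(2\rho)+5)=12t\log\log n+O(1)$;
\item the $\mathcal{C}_{DS}^P$ and local-balance constraints cost $O(1)$ by the cited results.
\end{itemize}
The total is at most $10\log n+12t\log\log n+O(1)$, absorbing lower-order terms as in the statement.
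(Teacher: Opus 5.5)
Your architecture matches the paper's proof. You recover $X[1]$ from $g_{ds}$ and split by Theorem~\ref{thm:1D1S}. In cases a) and b) you use the RS coset on $(\bar g_b(\mathbf{x}),\bm 0)$ followed by the inner codes $g_b$. In case c) you use the parity-split sums $h^{(a)}_i$ of $g'_{ds}$ over the windows $\mathcal{L}_j$. One variant is harmless: you treat the substitution-affected blocks as unknown-location errors in both a) and b), giving two erasures plus at most two errors, and $2\cdot 2+2\le 6$. The paper instead uses four erasures in case a), since there the substitution's run in $X[1]$ is known. Your case c) plan is what the paper does: pick $j$ so that $\mathcal{L}_j$ covers the known window of $X[1]$ extended one column left, then use $X'[i]_{[\lambda_0,\lambda_1-1]}\in\mathcal{M}_2(X[i]_{\mathcal{L}_j})$.

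\textbf{The gap.} It is the step you flag and leave open: recovering the deletion-affected blocks $\mathbf{x}_{\mathcal{I}_{i_1-1}},\mathbf{x}_{\mathcal{I}_{i_1}}$ from $\mathbf{x}'$ once their syndromes are known. Lemma~\ref{lem:burst-edit} needs a received segment whose endpoints the decoder can compute and which lies in $\mathcal{B}^E$ of the block. When the burst straddles $p_{i_1-1}t$, the left block loses an unknown number $s\in[1,t-1]$ of trailing symbols and the right block loses its first $t-s$. So the boundaries are not determined by $X[1]$ and $X'[1]$.

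Your fallback of marking these blocks as erasures does not help. They already are the outer-code erasures, and erasure-marking only recovers the values $g_b(\cdot)$, not the block contents.

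The missing idea is a fixed choice of cut points that works for every split. The paper takes $\mathcal{K}_{i_1-1}=[p_{i_1-2}t+1,\,p_{i_1-1}t-t]$ and $\mathcal{K}_{i_1}=[p_{i_1-1}t+1,\,p_{i_1}t-t]$ in $\mathbf{x}'$.
\begin{itemize}
\item The burst starts at some $a\ge (p_{i_1-1}-1)t+2>p_{i_1-1}t-t$. So $\mathbf{x}'_{\mathcal{K}_{i_1-1}}$ is unshifted and equals $\mathbf{x}_{\mathcal{I}_{i_1-1}}$ with its last $t$ symbols deleted.
\item $\mathbf{x}'_{\mathcal{K}_{i_1}}$ equals $\mathbf{x}_{\mathcal{I}_{i_1}}$ with its first $t$ symbols deleted if $a\le p_{i_1-1}t$, and with $[a,a+t-1]$ deleted otherwise.
\end{itemize}
Either way each segment is a single burst of $t$ deletions of its block, and both endpoints depend only on the runs of $X[1]$ and on $i_1$. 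For the substitution blocks, the original boundaries, shifted by $t$ if they lie after the deletion, suffice, because $\mathcal{B}^S$ allows at most $t$ substituted symbols.

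\textbf{Two smaller points.} First, you claim each affected block sees at most one burst of $t$-edits. This fails if the substitution burst falls inside $\mathcal{I}_{i_1-1}\cup\mathcal{I}_{i_1}$, because then one block suffers both a deletion burst and substitutions. The paper sidesteps this only by assuming $i_1-1>i_2$, so you should at least state that assumption explicitly.

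Second, in the redundancy count, $2(t-1)(6\log(2\rho)+5)=12(t-1)\log\log n+O(1)$, not $12t\log\log n$. Adding the $12\log\log n$ from $g_{ds}$ then gives exactly $12t\log\log n$. As you wrote it, the total would be $12(t+1)\log\log n$, and the extra $12\log\log n$ cannot be absorbed into $O(1)$.
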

\begin{proof}	
	By Construction  \ref{con3}, both $X[1]\in\mathcal{C}_{DS}^P$ and $\phi(X[1])$ are strong $(\ell,\epsilon)$-locally balanced sequences.
We distinguish three different cases according to Theorem \ref{thm:1D1S}.
	
	{\bf Case I:} Theorem \ref{thm:1D1S}-a) holds, i.e, the approximate positions of the burst-deletion and burst-substitution are located in  $\mathbf{x}_{\mathcal{I}_{i_1-1}\cup\mathcal{I}_{i_1}\cup\mathcal{I}_{i_2-1}\cup\mathcal{I}_{i_2}}$.
	In this case, for any $i\in[r(X[1])]\backslash\{i_1-1,i_1,i_2-1,i_2\}$, the substring $\mathbf{x}_{I_{i}}$ and $g_b(\mathbf{x}_{I_i})$ are known.
	Thus, $\bar{g}_b(\mathbf{x})$ experiences two bursts of  two erasures, which can be recovered since $\mb{r}=(\bar{g}_b(\mathbf{x}),\mathbf{0})\in\mathcal{C}_0+\delta$ and  $d_{\mathrm{min}}(\mathcal{C}_0)=7$.
	
	{\bf Case II:} Theorem \ref{thm:1D1S}-b) holds.
 In this case,  $g_b(\mathbf{x}_{I_i}),i\in\{i_1-1,i_1\}$ is unknown. In addition, at most two substitutions may occur at
 $\bar{g}_b(\mathbf{x})$ owing to one burst of $t$-substitutions in $\mathbf{x}$, resulting in two erasures and two substitutions
 in $\bar{g}_b(\mathbf{x})$, which can be recovered since $\mb{r}=(\bar{g}_b(\mathbf{x}),\mathbf{0})\in\mathcal{C}_0+\delta$ and  $d_{\mathrm{min}}(\mathcal{C}_0)=7$.

For these two cases, after recovering $g_b(\mathbf{x}_{\mathcal{I}_i}),i\in r(X[1])$ from $\bar{g}_b(\mathbf{x}')$, the precise locations of four erasures or two erasures together with two substitutions in $\bar{g}_b(\mathbf{x})$ can be identified, denoting by $g_b(\mathbf{x}_{\mathcal{I}_{i_1-1}}), g_b(\mathbf{x}_{\mathcal{I}_{i_1}}),g_b(\mathbf{x}_{\mathcal{I}_{i_2-1}})$ and $g_b(\mathbf{x}_{\mathcal{I}_{i_2}})$.
	Correspondingly, the approximate position of the two bursts in $\mathbf{x}$ can be located, namely, $\mathbf{x}_{\mathcal{I}_{i_1-1}\cup \mathcal{I}_{i_1}\cup\mathcal{I}_{i_2-1}\cup \mathcal{I}_{i_2}}$.
	To recover the original sequence $\mathbf{x}$,
	with $g_b(\mathbf{x}_{\mathcal{I}_i})$ being an ECC capable of correcting one burst of $t$-edits in $\mathbf{x}_{\mathcal{I}_i}$,
	it is sufficient to find four intervals $\mathcal{K}_i$ such that $\mathbf{x}'_{\mathcal{K}_i}\in \mathcal{B}^E(\mathbf{x}_{\mathcal{I}_{i}})$ for $i\in\{i_1-1,i_1,i_2-1,i_2\}$.
	Without loss of generality, let $i_1-1>i_2$ below.

	Following Theorem \ref{thm:1D1S}-a) and \ref{thm:1D1S}-b), for the sequence $\mathbf{x}'=(x'_1,x'_2,\dots,x'_{n-t})\in\Sigma_2^{n-t}$ obtained from $\mathbf{x}=(x_1,x_2,\dots,$ $x_n)\in\Sigma_2^n$,
	it holds that
	\begin{align*}
		x'_i=\left\{\begin{array}{ll}
			x_i, & \text{~if~} i\in[(p_{i_2-1}-1)t+1] \\
			&\hspace{1cm}\cup [p_{i_2}t+1,(p_{i_1-1}-1)t+1],\\
			x_i/\bar{x}_i, &  \text{~if~} i\in[(p_{i_2-1}-1)t+2,p_{i_2}t],\\
			x_{i-t}, &  \text{~if~} i\in[p_{i_1}t+1,n],
		\end{array}\right.
	\end{align*}
	where $\bar{x}_i$ indicates that a substitution error occur at the $i$-th component of $\mathbf{x}$.
	As a result, we can obtain
	\begin{align}
		\mathbf{x}'_{[p_{i_2-2}t+1,p_{i_2-1}t]}\in\mathcal{B}^S(\mathbf{x}_{\mathcal{I}_{i_2-1}}),\label{eqn:sub}\\
		\mathbf{x}'_{[p_{i_2-1}t+1,p_{i_2}t]}\in\mathcal{B}^S(\mathbf{x}_{\mathcal{I}_{i_2}}),\label{eqn:sub1}
	\end{align}
	and		
	$
		\mathbf{x}'_{[p_{i_1-2}t+1,p_{i_1-1}t-t]}\in\mathcal{B}^D(\mathbf{x}_{\mathcal{I}_{i_1-1}})$,
		$\mathbf{x}'_{[p_{i_1-1}t+1,p_{i_1}t-t]}\in\mathcal{B}^D(\mathbf{x}_{\mathcal{I}_{i_1}}).
$
	
	It is remarkable that the formulas \eqref{eqn:sub} and \eqref{eqn:sub1} hold since one burst of $t$-substitutions in $\mathbf{x}$ means that at most $t$ symbols, rather than exactly $t$ consecutive symbols in $\mathbf{x}$ are substituted.
	Thus, the four intervals exist by denoting $\mathcal{K}_{i_2-1}=[p_{i_2-2}t+1,p_{i_2-1}t]$, $\mathcal{K}_{i_2}=[p_{i_2-1}t+1,p_{i_2}t]$,
	$\mathcal{K}_{i_1-1}=[p_{i_1-2}t+1,p_{i_1-1}t-t]$, and
	$\mathcal{K}_{i_1}=[p_{i_1-1}t+1,p_{i_1}t-t]$.

	{\bf Case III:} Theorem \ref{thm:1D1S}-c) holds. Note that
   this case arises only when Lemma \ref{lem:1d1s}-b) holds.
	 Thus, from $g_{ds}\big(X[1]\big)$, we infer one deletion and one substitution are confined to a substring of $X[1]$ with length at most $3d\log(n/t)$. Following Observation \ref{fact:location}, one deletion and one substitution occur in $X[i]_{\mathcal{I}},i\in[2,t]$ with $|\mathcal{I}|\le 3d\log(n/t)+1$.
	 By \eqref{eqn:interval}, there exists an index $j_1$ such that $\mathcal{I}\subseteq \mathcal{L}_{j_1}$.
	
	Assuming that $\mathcal{L}_{j_1}=[\lambda_0,\lambda_1]$, thus, we know for $i\in[2,t]$,
	\begin{equation}\label{eqn:lambda2}
		\begin{aligned}
			X[i]_{[1,\lambda_0)}&=X'[i]_{[1,\lambda_0)}, \\
			X[i]_{[\lambda_1+1,n]}&=X'[i]_{[\lambda_1,n-1]},
		\end{aligned}
	\end{equation}
	and
	\begin{align}\label{eqn:x'x2}
		X'[i]_{[\lambda_0,\lambda_1-1]}\in\mathcal{M}_{2}(X[i]_{\mathcal{L}_{j_1}}).
	\end{align}
	
		According to \eqref{eqn:bar-h}, while $j_1\bmod 2=a$,	we can compute
	\begin{align}\label{eqn:g'ds}
		g'_{ds}(X[i]_{\mathcal{L}_{j_1}})=h^{(a)}_i(\mathbf{x})-\sum_{j\in[\lceil n/\rho\rceil-1]\backslash\{j_1\}\atop j \bmod 2=a} g'_{ds}\big(X[i]_{\mathcal{L}_j}\big) \notag\\
		\pmod{N},
	\end{align}
where $a\in\{0,1\}$ and $N=2^{6\log (2\rho)+5}$.

	From Lemma \ref{lem:ds}, it holds that
	$g'_{ds}(X[i]_{\mathcal{L}_j})< N$, where  $i\in[2,t]$, $j\in[\lceil n/\rho\rceil-1].
	$ Therefore, $X[i]_{\mathcal{L}_{j_1}}$ can be recovered by $X'[i]_{[\lambda_0,\lambda_1-1]}$ and \eqref{eqn:g'ds}.
	With \eqref{eqn:lambda2} and \eqref{eqn:x'x2}, $X[i]$ can be uniquely recovered.
Thus, we obtain the original sequence $\mathbf{x}$.
	
Observe that $\mathcal{C}_0$ is an $(n-6)$-dimensional subspace of 
	$\mathbb{F}_{q}^{n}$.
	Hence, the full space $\mathbb{F}_{q}^{n}$ can be decomposed into ${q}^{6}$ disjoint cosets of $\mathcal{C}_0$. In particular, there exist vectors $\delta_i \in \mathbb{F}_{q}^{n}$ such that
	$\mathcal{C}_0^i \triangleq \mathcal{C}_{1} + \delta_i,i \in [{q}^{6}]$,
	and the collection $\{\mathcal{C}_0^i:i\in[{q}^{6}]\}$ forms a partition of $\mathbb{F}_{q}^{n}$.
	For each $i\in[{q}^{6}]$, we can accordingly construct a code $\mathcal{C}_{ds}^{\delta_i}$.
	Since each coset is generated by adding a fixed vector to $\mathcal{C}_0$, all cosets inherit the same minimum distance, namely,
	%\begin{align*}
	$d_{\min}(\mathcal{C}_0^i) = d_{\min}(\mathcal{C}_0)$.
	%\end{align*}
	Therefore, every coset $\mathcal{C}_0^i$ admits identical error-correcting capability. 
	
	Define $\mathcal{C}^* \triangleq \bigcup_{i\in[{q}^{6}]} \mathcal{C}_{ds}^{\delta_i}$.
	By the pigeonhole principle, there exists at least one index $i \in [{q}^{6}]$ such that $|\mathcal{C}_{ds}^{\delta_i}| \ge |\mathcal{C}^*|/{q}^{6}$.
	Consequently, one can select a coset representative $\delta=\delta_i$ for which the induced code incurs no more than $6\log n+O(1)$ bits of redundancy.

	According to Equation \eqref{eqn:bar-h}, Lemmas \ref{lem:mix} and \ref{lem:ds}, the length of $f_{ds}(\mathbf{x})$  (viewed as a binary string) satisfies
	\begin{align*}
		|f_{ds}(\mathbf{x})|=&|g_{ds}(X[1])|+\sum_{i=2}^t\sum_{a=0}^1\left |h^{(a)}_i(\mathbf{x})\right|\\
		=&4\log\left({n\over t}\right)+12\log\log \left({n\over t}\right)+O(1)\\
		&+2(t-1)\log N+ 6\log n +O(1)\\
		\le &10\log{n}+12t\log\log n +O(1),
	\end{align*}
	where the last equation holds due to $\log N=6\log 2\rho+5$ and $\rho=O(\log n)$, 
	 completing the proof.
\end{proof}

\begin{remark}
The	syndrome compression technique proposed in \cite{sima2020syndrome} can be directly applied to construct ECCs capable of correcting one burst of $t$-deletions and one burst of $t$-substitutions with $8\log n+o(\log n)$ bits of redundancy.  However, this approach does not yield an explicit construction, whereas the codes proposed in this paper are explicit.
\end{remark}

\section{Conclusions}\label{sec:Conclusions}
In this paper, we presented an explicit construction of binary ECCs with $10\log n+12t\log\log n+O(1)$ bits of redundancy, capable of correcting one burst of $t$-deletions together with one burst of $t$-substitutions. To achieve this, we represent the original sequence in a matrix form and convert two burst errors to some erasures or substitutions, which can be corrected by a coset of an RS code.   %Particularly, when one burst of $t$-substitutions is replaced to one burst of $t$-complements, the redundancies of those codes can be further reduced to $5\log n+o(\log n)$ bits.

	\ifCLASSOPTIONcaptionsoff
	\newpage
	\fi

	\bibliographystyle{IEEEtran}
	\bibliography{myreference}

\end{document}